\def\H{{\cal H}}
\def\B{{\cal B}}
\def\HC{{\cal HC}}
\def\HP{{\cal HP}}
\def\C{{\cal C}}
\def\D{{\cal D}}
\def\P{{\cal P}}
\def\PT{{\cal PT}}
\def\U{{\cal U}}
\def\Rset{\mathbb{R}}
\def\arccos{\mathrm{arccos\:}}
\def\id{\mathrm{id\:}}
\def\;{\, ; \, }
\def\Dset{\mathbb{D}}
\def\lp{\mathrm{LP}}
\renewcommand\vec[1]{\boldsymbol{#1}}
\newtheorem*{proposition}{Proposition}
\newtheorem{corollary}{Corollary}
\begin{document}

\title[General entropy-like uncertainty  relations in finite dimensions]{General
  entropy--like uncertainty relations in finite dimensions}

\author{S. Zozor$^{1,2}$, G. M. Bosyk$^{2,1}$, and M. Portesi$^{2,1}$}
\address{$^1$~Laboratoire  Grenoblois  d'Image,  Parole, Signal  et  Automatique
  (GIPSA-Lab, CNRS),  11 rue des Math\'ematiques, 38402  Saint Martin d'H\`eres,
  France}
\address{$^2$~Instituto   de   F\'{\i}sica  La   Plata   (IFLP),  CONICET,   and
  Departamento  de  F\'{\i}sica,   Facultad  de  Ciencias  Exactas,  Universidad
  Nacional de La Plata, C.C.~67, 1900 La Plata, Argentina}
\eads{\mailto{steeve.zozor@gipsa-lab.inpg.fr},
  \mailto{gbosyk@fisica.unlp.edu.ar}, \mailto{portesi@fisica.unlp.edu.ar}}


\begin{abstract}

  We revisit entropic formulations of the uncertainty principle for an arbitrary
  pair of positive operator-valued measures (POVM) $A$ and $B$, acting on finite
  dimensional   Hilbert  space.   Salicr\'u   generalized  $(h,\phi)$-entropies,
  including  R\'enyi and  Tsallis ones  among  others, are  used as  uncertainty
  measures associated  with the distribution probabilities  corresponding to the
  outcomes of the observables. We obtain a nontrivial lower bound for the sum of
  generalized entropies for any pair of entropic functionals, which is valid for
  both  pure  and  mixed states.   The  bound  depends  on the  overlap  triplet
  $(c_A,c_B,c_{A,B})$ with  $c_A$ (resp.  $c_B$)  being the overlap  between the
  elements of  the POVM $A$ (resp.   $B$) and $c_{A,B}$ the  overlap between the
  pair  of  POVM.   Our  approach  is   inspired  by  that  of  de  Vicente  and
  S\'anchez-Ruiz [Phys.\ Rev.\  A \textbf{77}, 042110 (2008)] and  consists in a
  minimization of the entropy sum  subject to the Landau--Pollak inequality that
  links the maximum probabilities of both observables.  We solve the constrained
  optimization problem in  a geometrical way and furthermore,  when dealing with
  R\'enyi  or Tsallis  entropic formulations  of the  uncertainty  principle, we
  overcome the H\"older conjugacy constraint  imposed on the entropic indices by
  the Riesz--Thorin theorem.  In the  case of nondegenerate observables, we show
  that for  given $c_{A,B} >  \frac{1}{\sqrt2}$, the bound obtained  is optimal;
  and  that,  for  R\'enyi  entropies,  our  bound  improves  Deutsch  one,  but
  Maassen--Uffink  bound  prevails  when  $c_{A,B}  \leq\frac12$.   Finally,  we
  illustrate by  comparing our bound  with known previous results  in particular
  cases of R\'enyi and Tsallis entropies.

\end{abstract}

\pacs{03.65.Ta, 89.70.Cf, 03.65.Ca, 03.65.Aa}

\submitto{\JPA}

\maketitle


\section{Introduction}
\label{Introduction:sec}

The      uncertainty      principle~(UP),      originally     formulated      by
Heisenberg~\cite{Hei27}, is one the  most characteristic features of the quantum
world.  The  principle establishes  that one cannot  predict with  certainty and
simultaneously  the outcomes of  two (or  more) incompatible  measurements.  The
study  of quantitative  formulations of  this principle  has a  long outstanding
history.  First formulations  made use of variances as  uncertainty measures and
the principle was described state by state by the existence of a lower bound for
the  product  of  the   variances~\cite{Hei27,  Ken27,  Rob29}.   However,  such
formulations are not always adequate since the variance is not always convenient
for describing the uncertainty of  a random variable.  For instance, there exist
variables  with infinite  variance~\cite{SamTaq94}.   Moreover, in  the case  of
discrete-spectrum  observables, the  universal  (state-independent) lower  bound
becomes trivial  (zero), and thus  Heisenberg-like inequalities do  not quantify
the  UP~\cite{Deu83, MaaUff88, Lui01,  Lui11, Zoz12}.   For these  reasons, many
authors attempted  and still attempt to propose  alternative formulations, using
other    uncertainty   measures.     One   possibility    consists    in   using
information-theoretic   measures~\cite{Sha48,  Ren61,   CovTho06},   leading  to
entropic  uncertainty  relations~(EURs).   In  this line,  pioneering  works  by
Hirschman~\cite{Hir57}, Bialynicki-Birula and Mycielski~\cite{BiaMyc75} based on
important results due  to Beckner~\cite{Bec75}, Deutsch~\cite{Deu83}, or Maassen
and   Uffink   (MU)~\cite{MaaUff88}  who   proved   a   result  conjectured   by
Kraus~\cite{Kra87}, have  given rise to different formulations  of the principle
based  on Shannon  and generalized  one-parameter information  entropies,  or on
entropic  moments~\cite{Bia84, Raj95, San95,  PorPla96, San98,  GhiMar03, Bia06,
  ZozVig07, Lui07, VicSan08,  ZozPor08, WuYu09, WehWin10, BiaRud10, DehLop10:12,
  TomRen11,  ColYu11, BosPor11,  Ras11:03,Ras11:11,  Ras12, ColCol12,  BosPor12,
  ZozBos13, PucRud13,  FriGhe13, BosPor13, ColPia14,  RudPuc14}.  Versions using
the  sum  of  variances  (instead  of their  product)~\cite{Hua12},  the  Fisher
information~\cite{RomAng99,   RomSan06,  SanGon06},   or   moments  of   various
orders~\cite{ZozPor11} have also been developed.

In this  contribution, we  focus on  the formulation of  the UP  in the  case of
finite dimensions by using $(h,\phi)$-entropies, a generalization of the Shannon
entropy  due   to  Salicr\'u  \textit{et   al.}~\cite{SalMen93,  MenMor97}.   In
particular,  we deal  with two  well-known one-parameter  entropy  families, the
R\'enyi and Tsallis ones.  Our aim is to obtain a universal and nontrivial bound
for the sum of the entropies associated  with the outcomes of a pair of positive
operator-valued measures.   In order to do  this, we follow a  method similar to
that  of de  Vicente  and S\'anchez-Ruiz  in  Ref.~\cite{VicSan08}, solving  the
minimization  problem  for the  sum  of  generalized  entropies subject  to  the
Landau--Pollak inequality~\cite{LanPol61}.  We develop a geometrical approach to
the problem.

The paper  is organized as  follows. In Sec.~\ref{Statement:sec}, we  begin with
basic  definitions  and notation,  we  present  the  problem, and  we  summarize
previous  results on  EURs  that deal  with  R\'enyi or  Tsallis entropies.   In
Sec.~\ref{Main:sec}, we  give our  main results concerning  general entropy-like
formulations of  the~UP in finite dimensions.   For the sake  of comparison with
existing bounds in the  literature, in Sec.~\ref{Comparisons:sec} we choose some
particular cases.   A discussion is provided  in Sec.~\ref{Discussions:sec}. The
proofs of our results are given in detail in a series of appendices.


\section{Statement of the problem: notation and previous results}
\label{Statement:sec}


\subsection{Generalized entropies}

We  are interested in  quantitative formulations  of the  uncertainty principle,
particularly  through   the  use  of   information-theoretic  quantities.   More
precisely, as measure of ignorance or of lack of information we employ Salicr\'u
\textit{et al.}  $(h,\phi)$-entropies~\cite{SalMen93, MenMor97},
\begin{equation}
H_{(h,\phi)}(p) = h \left( \sum_{k=1}^{N} \phi(p_k) \right)
\label{SalicruEnt:eq}
\end{equation}
for any probability vector $p \in \P_N$ and where the {\em entropic functionals}
$\phi:  [0 \; 1]  \mapsto \Rset$  and $h:  \Rset \mapsto  \Rset$ are  such that,
either $\phi$ is concave  and $h$ is increasing, or $\phi$ is  convex and $h$ is
decreasing.
We restrict here to employ entropic functionals such that
\begin{itemize}
\item $\phi$ is continuous and strictly concave or strictly convex,
\item $h$ is continuous and strictly monotone,
\item  $\phi(0) = 0$  (so that  the ``elementary''  uncertainty associated  to a
  event with zero-probability is zero),
\item $h(\phi(1)) = 0$ (without loss of generality).
\end{itemize}
Many of  the well-known cases in  the literature satisfy  these assumptions (see
Refs.~\cite{SalMen93, MenMor97}  for a list  of examples). Among them,  the most
renowned ones are
\begin{itemize}
\item Shannon entropy~\cite{Sha48}, given by $\phi(x)  = - x \log x$ and $h(x) =
  x$ where $\log$ stands for the natural logarithm, corresponding to
  \begin{equation}
    H(p) = - \sum_k p_k \log p_k
    \label{Shannon:eq}
  \end{equation}
\item R\'enyi  entropies~\cite{Ren61}, introduced  in the domain  of mathematics
  from  the  same   axiomatics  as  Shannon  but  relaxing   only  one  property
  (recursivity is generalized); it is  given by $\phi(x) = x^\lambda$, and $h(x)
  = \frac{\log x}{1-\lambda}$, where $\lambda \geq 0$ is the entropic index,
  \begin{equation}
    R_\lambda(p) = \frac{1}{1-\lambda} \log \left( \sum_k p_k^{\,\lambda} \right)
    \label{Renyi:eq}
  \end{equation}
\item    Tsallis     entropies,    firstly    introduced     by    Havrda    and
  Charv\'at~\cite{HavCha67} from  an axiomatics quite close to  that of Shannon,
  then  by  Dar\'oczy~\cite{Dar70}  through  a generalization  of  a  functional
  equation satisfied by the Shannon entropy, and finally by Tsallis~\cite{Tsa88}
  in the domain  of nonextensive physics; it is given  by $\phi(x) = x^\lambda$,
  $\lambda \ge 0$, and $h(x) = \frac{x-1}{1-\lambda}$,
  \begin{equation}
    S_\lambda(p) = \frac{\displaystyle 1 -\sum_k p_k^{\,\lambda}}{\lambda-1}
    \label{Tsallis:eq}
  \end{equation}
\end{itemize}
The last two cases belong to  a general one-parameter family given by $\phi(x) =
x^\lambda$ and $h(x) = \frac{f(x)}{1-\lambda}$,
\begin{equation}
F_\lambda(p) = \frac{\displaystyle f \!\left( \sum_k \, p_k^{\, \lambda}
\right)}{1-\lambda}
\label{Flambda:eq}
\end{equation}
with $f$ increasing and $f(1) = 0$, and where the {\em entropic index} $\lambda$
plays the role of a ``magnifying  glass'', in the following sense: when $\lambda
<   1$,  the  contribution   of  the   different  terms   in  the   sum  $\sum_k
p_k^{\,\lambda}$ becomes more uniform with respect to the case $\lambda=1$, thus
stressing the  tails of  the distribution; conversely,  when $\lambda >  1$, the
leading probabilities of the distribution  are stressed in the summation.  As an
extreme example, for $\lambda = 0$  the generalized entropy $F_0(p)$ is simply a
function  of the number  of nonzero  components of  the probability  vector $p$,
regardless of the values of  these probabilities; this measure is closely linked
to the $l^0$ quasi-norm which measures the {\em sparsity} of a representation in
signal processing~\cite{RicTor13,  ElaBru02, GhoJam11}.  If  additionally $f$ is
differentiable,  with  $f'(1)  =  1$,  the Shannon  entropy  is  recovered  from
$F_\lambda$ entropies when $\lambda \to 1$.

The   generalized    $(h,\phi)$-entropies~\eref{SalicruEnt:eq}   satisfy   usual
properties as:
\begin{itemize}
\item $H_{(h,\phi)}(p)$ is a Schur-concave function of its argument, that is, if
  $p$ is majorized\footnote{By definition, $p \prec q$ means that, $\sum_{k=1}^m
    p^\downarrow_{\,k} \le \sum_{k=1}^m q^\downarrow_{\,k}, m = 1, \ldots, N-1$,
    and  $\sum_{k=1}^N p_k  = \sum_{k=1}^N  q_k$ where  $\cdot^\downarrow$ means
    that the components  are rearranged in decreasing order.}   by $q$, which is
  denoted  $p  \prec  q$,  then  $H_{(h,\phi)}(p)  \ge  H_{(h,\phi)}(q)$.   This
  property is a consequence of Karamata inequality that states that if $\phi$ is
  convex  (resp.\ concave), then  $p \mapsto  \sum_k \phi(p_k)$  is Schur-convex
  (resp.\                    Schur-concave)                    (see~\cite{Kar32}
  or~\cite[Chap.~3,~Prop.~C.1]{MarOlk11}), together  with the decreasing (resp.\
  increasing) property  of $h$.   The property of  Schur-concavity is  useful in
  some     problems    of     combinatorial,     numerical    or     statistical
  analysis~\cite{MarOlk11}.
\item $H_{(h,\phi)}(p)  \ge 0 \  \ \forall\ p  \in \P_N$, with equality  iff the
  probability  distribution  is a  Kronecker  delta:  $p_k  = \delta_{k,i}$  for
  certain $i$,  that is,  the $i$th-outcome appears  with certainty so  that the
  ignorance  is zero.   This property  is  a consequence  of Schur-concavity  of
  $H_{(h,\phi)}$ since  $p \prec [1 \quad  0 \quad \cdots  \quad 0]^t$, together
  with $h(\phi(1)) = 0$.
\item  $H_{(h,\phi)}(p) \le  h\!\left(  N \phi  \!   \left( \frac{1}{N}  \right)
  \right)  \  \  \forall\  p  \in  \P_N$,  with  equality  iff  the  probability
  distribution  is  uniform: $p_k  =  \frac{1}{N}$ for  all  $k$,  that is,  all
  outcomes appear  with equal  probability so that  the uncertainty  is maximal.
  Again,  this property is  a consequence  of Schur-concavity  of $H_{(h,\phi)}$
  since $\left[  \frac{1}{N} \quad \cdots  \quad \frac{1}{N} \right]^t  \prec p$
  (see~\cite[Eq.~(8),~p.~9]{MarOlk11}).
\item $H_{(h,\phi)}(p)$ is a concave function  of $p$ if $h$ is concave; this is
  due  to the  facts  that: (i)  for  concave (resp.\  convex) function  $\phi$,
  function    $p    \mapsto     \sum_k    \phi(p_k)$    is    concave    (resp.\
  convex)~\cite{CamMar09},   and  (ii)  function   $h$  is   increasing  (resp.\
  decreasing).  This property is useful in optimization problems~\cite{AndEvg07,
    CamMar09}. Shannon entropy is  known to be concave~\cite{CovTho06}.  R\'enyi
  entropy is concave  for $\lambda \in [0 \;  1]$; and in fact, it  can be shown
  that there exists an $N$-dependent  index $\lambda_*(N)$ greater than 1, up to
  which R\'enyi entropy remains concave~\cite[p.~57]{BenZyc06}.  Tsallis entropy
  is concave for any index $\lambda \ge 0$.
\end{itemize}

Furthermore, the  one-parameter entropy $F_\lambda$ is a  decreasing function in
terms of $\lambda$ for fixed $p$.   With the positivity of $f$, this ensures the
convergence of $F_\lambda$ (at least simply) when $\lambda \to + \infty$ so that
$F_\infty$ could  be called {\em minimal  generalized $F_\lambda$-entropy} (when
the limit is not identically zero).

Finally, note that from the strict  monotony of the function $h$, there exists a
one-to-one mapping between two generalized entropies sharing the same functional
$\phi$, say $(h,\phi)$ and $(g,\phi)$, under the form $H_{(h,\phi)}(p) = h \big(
g^{-1}  \big(  H_{(g,\phi)}(p)  \big)  \big)$.   For  instance,  the  one-to-one
mappings     between     R\'enyi     entropy~\eref{Renyi:eq}     and     Tsallis
entropy~\eref{Tsallis:eq}, for a given $\lambda$, are
\begin{equation}
S_\lambda(p) = \frac{1 - \exp\big( (1-\lambda) \, R_\lambda(p) \big)}{\lambda-1}
\label{RenyiToTsallisMapping:eq}
\end{equation}
and
\begin{equation}
R_\lambda(p) = \frac{1}{1-\lambda} \log \big( 1 + (1-\lambda) \, S_\lambda(p)
\big).
\label{TsallisToRenyiMapping:eq}
\end{equation}


\subsection{Entropic uncertainty relations}

Let $\H$ be an $N$-dimensional  Hilbert space.  A general quantum measurement is
described by positive operator-valued measures (POVM).  This is a set $A = \{A_i
\}_{i=1}^{N_A}$  of  Hermitian positive  semidefinite  operators satisfying  the
completeness  relation $\sum_{i=1}^{N_A}  A_i=  I$, where  $I$  is the  identity
operator and  $N_A$ is the number of  outcomes.  For given POVM  $A$ and quantum
system described by a  density operator $\rho$ (Hermitian, positive semidefinite
with unit trace)  acting on $\H$, the probability of the  $i$th outcome is equal
to $p_i(A,\rho) = \Tr (A_i \rho)$.

In this contribution,  we consider the $(h,\phi)$-entropies~\eref{SalicruEnt:eq}
for the probability vectors
\begin{eqnarray}
p(A,\rho) & = & \left[p_1(A,\rho) \: \cdots \: p_{N_A}(A,\rho) \right]^t \quad
\mbox{with} \quad p_i(A,\rho) = \Tr (A_i \rho) \quad \mbox{and} \nonumber \\
p(B,\rho) & = & \left[p_1(A,\rho) \: \cdots \: p_{N_B}(B,\rho) \right]^t \quad
\mbox{with} \quad p_j(B,\rho) = \Tr (B_j \rho), \nonumber
\end{eqnarray}
associated  with  the  measurements  of two POVM  $A$  and  $B$,  respectively.

The fact that the sum of $(h,\phi)$-entropies is lower bounded gives rise to
an entropy-like formulation of the~UP, that is, inequalities of the form
\begin{equation}
H_{(h_A,\phi_A)} \big( p(A,\rho) \big) + H_{(h_B,\phi_B)} \big( p(B,\rho) \big) \ge
\B_{(h_A,\phi_A),(h_B,\phi_B)}
\label{F-UP:eq}
\end{equation}
for  any two pairs  $(h_A,\phi_A)$ and  $(h_B,\phi_B)$ of  entropic functionals,
where the  bound $\B_{(h_A,\phi_A),(h_B,\phi_B)}$ is  nontrivial, i.e., nonzero,
and  universal in the  sense of  being independent  of the  state $\rho$  of the
quantum system.  In particular, dealing with the family $F_\lambda$, we focus on
the case where  $f$ is the same  for both entropies, but with  an arbitrary pair
$(\alpha,\beta)$ of nonnegative entropic indices.   The ultimate goal is to find
the  optimal bound,  which  by definition  is  obtained by  minimization of  the
left-hand side, i.e.,
\begin{equation}
\overline{\B}_{(h_A,\phi_A),(h_B,\phi_B)}(A,B) \equiv \min_\rho \: \left\{
H_{(h_A,\phi_A)} \big( p(A,\rho) \big) + H_{(h_B,\phi_B)} \big( p(B,\rho) \big)
\right\}
\label{BoundAsAMin:eq}
\end{equation}

In the case of two nondegenerate quantum measurements, the optimal bound depends
on the transformation matrix $T$ whose entries are given by
\begin{equation}
T_{ij} = \langle b_j | a_i \rangle,
\label{Relation_psi_psitilde:eq}
\end{equation}
where   $\left\{  |a_i\rangle  \right   \}_{i=1}^N$  and   $\left\{  |b_j\rangle
\right\}_{j=1}^N$  are eigenbases of  $A$ and  $B$, respectively  ($A_i =  | a_i
\rangle \langle a_i |$,  $B_j = | b_j \rangle \langle b_j |$,  $N_A = N_B = N$).
From the  orthonormality of the bases,  $T \in \U(N)$ where  $\U(N)$ denotes the
set  of $N  \times N$  unitary matrices.   A relevant  characteristic of  such a
unitary matrix is its greatest-modulus element,
\begin{equation}
c(T) = \max_{i,j} | \langle b_j | a_i \rangle |,
\label{Overlap:eq}
\end{equation}
the so-called  {\em overlap} between  the eigenbases of  $A$ and $B$.   From the
unitary  property of  matrix $T$,  the overlap  is in  the range  $c  \in \left[
  \frac{1}{\sqrt N} \; 1 \right]$.  The case $c = \frac{1}{\sqrt N}$ corresponds
to $A$  and $B$ being  \textit{complementary observables}, meaning  that maximum
certainty in  the measure of  one of them,  implies maximum ignorance  about the
other.  In the  opposite extreme case, $c=1$ corresponds  to observables $A$ and
$B$ sharing (at  least) an eigenvector; this situation  happens for example when
the observables commute.

In  this nondegenerate  context,  to find  the  optimal bound  depending on  the
transformation matrix is a difficult problem  in general; a weaker problem is to
restrict to bounds  depending on the overlap $c$ instead of  on the whole matrix
$T$. Thus, the optimal $c$-dependent bound writes
\begin{equation}
\widetilde{\B}_{(h_A,\phi_A),(h_B,\phi_B);N}(c) \:\: = \min_{T \in \U(N):\: c(T)
= c} \overline{\B}_{(h_A,\phi_A),(h_B,\phi_B)}(T)
\label{BoundOverlap:eq}
\end{equation}
We call  $\widetilde{\B}_{(h_A,\phi_A),(h_B,\phi_B);N}(c)$ the {\em $c$-optimal}
bound        in        order         to        distinguish        it        from
$\overline{\B}_{(h_A,\phi_A),(h_B,\phi_B)}(T)$  that we  call  {\em $T$-optimal}
bound.

Similarly,  in the  general POVM  framework, finding  the  $(A,B)$-optimal bound
Eq.~\eref{BoundAsAMin:eq}  is a  difficult  task. In  this  context, a  relevant
characteristic of the pair $(A,B)$ is the triplet of overlaps,
\begin{equation}
\vec{c}(A,B) = ( c_A, c_B , c_{A,B} ) \quad \mbox{where } \quad \left\{
\begin{array}{lll}
c_A & = & \displaystyle \max_i \| \sqrt{A_i} \| \\[3mm]
c_B & = & \displaystyle \max_j \| \sqrt{B_j} \| \\[3mm]
c_{A,B} & = & \displaystyle \max_{i,j} \| \sqrt{A_i} \, \sqrt{B_j} \|
\end{array}\right.
\label{OverlapSet:eq}
\end{equation}
[in the nondegenerate case, $\vec{c} =  (1,1,c)$].  A weaker problem is again to
restrict  to bounds  depending only  on $\vec{c}$,  the  $\vec{c}$-optimal bound
being
\begin{equation}
\widetilde{\B}_{(h_A,\phi_A),(h_B,\phi_B); \vec{N}}(\vec{c}) = \min_{(A,B):\:
\vec{c}(A,B) = \vec{c}} \overline{\B}_{(h_A,\phi_A),(h_B,\phi_B)}(A,B)
\label{BoundOverlapSet:eq}
\end{equation}
with $\vec{N} = (N_A,N_B,N)$.

\hfill

The study  of entropic formulations to quantify  the~UP is not new  and has been
addressed in various contexts~\cite{Deu83, Bia84, Kra87, MaaUff88, Raj95, San95,
  PorPla96, San98, GhiMar03, Bia06, Lui07, ZozVig07, ZozPor08, VicSan08, WuYu09,
  WehWin10,   BiaRud10,  DehLop10:12,  BosPor11,   Ras11:03,Ras11:11,  TomRen11,
  ColYu11,  BosPor12, ColCol12, Ras12,  ZozBos13, PucRud13,  FriGhe13, BosPor13,
  ColPia14, RudPuc14}.  However, the problem of finding $\vec{c}$-optimal (resp.
$c$-optimal) or $(A,B)$-optimal (resp.  $T$-optimal) bounds in the form posed in
Eqs.~\eref{F-UP:eq}--\eref{BoundOverlapSet:eq} still remains open in many cases.
Moreover, many available results correspond to R\'enyi or Tsallis entropies with
conjugated       indices       (in       the      sense       of       H\"older:
$\frac{1}{2\alpha}+\frac{1}{2\beta}=1$) as  they are based  on the Riesz--Thorin
theorem~\cite{HarLit52};  however,  recently   some  results  were  derived  for
nonconjugated indices in some particular situations.

For  the sake  of  later comparison  we  summarize existing  bounds, dealing  in
particular with R\'enyi or Tsallis entropies, classified by the entropic measure
used  and  the  entropic indices  involved.   To  fix  notation, we  define  the
following regions in the $\alpha$--$\beta$-plane:
\begin{equation}\left\{\begin{array}{l}
\C = \left\{ (\alpha,\beta) \in \left( \frac12 \; + \infty \right)^2: \:
\beta = \frac{\alpha}{2 \alpha-1} \right\} \vspace{2.5mm}\\
\underline{\C} = \left[ 0 \; \frac12 \right] \times \Rset_+ \ \ \bigcup \:
\left\{ (\alpha,\beta) \in \Rset_+^{\,2}: \: \alpha > \frac12, \: \beta <
\frac{\alpha}{2 \alpha-1} \right\} \vspace{2.5mm}\\
\overline{\C} = \left\{ (\alpha,\beta) \in \Rset_+^{\,2}: \: \alpha > \frac12, \:
\beta > \frac{\alpha}{2 \alpha-1} \right\}
\end{array}\right.\label{Regions:eq}\end{equation}
which  are  called conjugacy  curve  and  regions  ``below'' and  ``above''  the
conjugacy curve, respectively (see Fig.~\ref{Regions:fig}).
\begin{figure}[htbp]
\begin{tabular}
{
>{}m{.31\textwidth}
>{}m{.66\textwidth}
}
\centerline{\includegraphics[height=3.75cm]{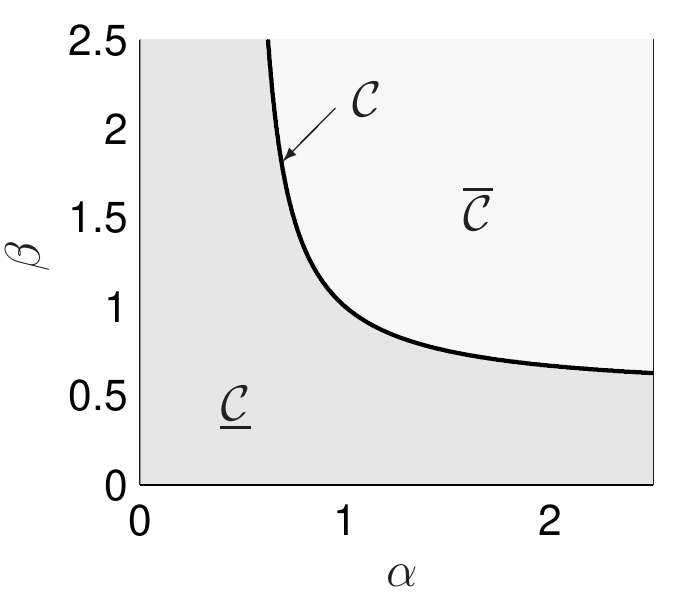}}
&
\vspace{-5mm}
\caption{The conjugacy curve $\C$ is represented by the solid line (the positive
  branch  of the  hyperbola  $\frac{1}{2\alpha}+\frac{1}{2\beta}=1$), while  the
  region $\underline{\C}$ ``below''  this curve is in dark  gray, and the region
  $\overline{\C}$ ``above'' that curve is represented in light gray.}
\label{Regions:fig}
\end{tabular}
\end{figure}\newline
Results available in the literature comprise the following:
\begin{itemize}
\item  Shannon entropy: $(\alpha,\beta)=(1,1)$
    \begin{itemize}
    \item[$\diamond$]  Deutsch obtained  the first  bound in  1983 \cite{Deu83},
      which is given by $\B^D(c) = - 2 \log \left( \frac{1+c}{2} \right)$.
    \item[$\diamond$]  MU  improved Deutsch  bound  by  using the  Riesz--Thorin
      theorem, in the context of pure  states.  Their bound is $\B^{MU}(c) = - 2
      \log c$  \ and  it is not  optimal, except for  complementary observables,
      that is, for $c = \frac{1}{\sqrt{N}}$.
    \item[$\diamond$]  de  Vicente  and  Sanchez-Ruiz~\cite{VicSan08,  BosPor11}
      improved MU bound  in the range $c \in [c^* \;  1]$ with $c^*\simeq 0.834$
      by using the Landau--Pollak inequality that links $\max_i p_i(A,\rho)$ and
      $\max_j p_j(B,\rho)$,  in the context of  pure states.  This  bound is not
      optimal,  except  for  complementary  observables  (see  also~\cite{Bia06,
        ZozPor08}) or for qubits ($N=2$)~\cite{GhiMar03, ZozBos13}.
    \item[$\diamond$]  Recently, Coles  and Piani  (CP)~\cite{ColPia14} improved
      the MU bound  in the whole range of the overlap  $c$, indeed they obtained
      the bound  $\B^{CP}(c,c_2) = -  2 \log c  + (1-c) \,  \log \frac{c}{c_2}$,
      where $c_2$ is  the second largest value among  the $|T_{ij}|$.  Moreover,
      the authors obtained a stronger but implicit bound $\B^{\overline{CP}}(T)$
      and  generalized their  results  for POVMs  and  bipartite scenarios  (see
      also~\cite{RudPuc14}).
    \end{itemize}
\item R\'enyi entropies:
  \begin{itemize}
  \item[$\diamond$]  For  $(\alpha,\beta) \in  \C$,  the  MU bound  $\B^{MU}(c)$
    remains valid.   Rastegin extended this result  to the case  of mixed states
    and generalized  quantum measurements~\cite{Ras10, Ras12}.   These works are
    mainly based on  Riesz--Thorin theorem.  The bound is  not tight, except for
    $c=\frac{1}{\sqrt{N}}$~\cite{Bia06, ZozPor08}.
  \item[$\diamond$]  For  $(\alpha,\beta)  \in  \underline{\C}$,  the  MU  bound
    $\B^{MU}(c)$ remains valid due to the decreasing property of R\'enyi entropy
    with  the index.   Here again,  for $c  = \frac{1}{\sqrt{N}}$  the  bound is
    optimal~\cite{Bia06, ZozPor08}.
  \item[$\diamond$]  For $(\alpha,\beta) \in  \overline{\C}$, the  Deutsch bound
    $\B^D(c)$  remains  valid.   This  result  is  due  to  MU  who  solved  the
    minimization of the  sum of min-entropies (infinite indices)  subject to the
    Landau--Pollak  inequality.  Note  that the  Deutsch bound  is valid  in the
    whole  positive quadrant  (but  it is  not  optimal) due  to the  decreasing
    property of the R\'enyi entropy vs the index.
  \item[$\diamond$]   For   $\beta   =   \alpha$,   Pucha{\l}a,   Rudnicki   and
    {\.Z}yczkowski (PRZ)  in Ref.~\cite{PucRud13}  derived recently a  series of
    $N-1$  bounds   depending  on  the   transformation  matrix  $T$   by  using
    majorization  technique.   We   denote  by  $\B_{\alpha;\log}^{PRZ}(T)$  the
    greatest  of those  bounds which  is  not $T$-optimal  although it  improves
    previous ones in several situations.   A particular bound of the series (the
    worst  one)   depends  only  on  the   overlap  $c$,  and   expresses  as  $
    \frac{1}{1-\alpha}  \log \left[  \left( \frac{1+c}{2}  \right)^{2  \alpha} +
      \left( 1  - \left( \frac{1+c}{2} \right)^2 \right)^\alpha  \right]$ but it
    is  not $c$-optimal. Further  extensions of  this work  to mixed  states and
    generalized     quantum    measurements     are    given     by    Friedland
    \etal~\cite{FriGhe13}.
  \item[$\diamond$] For $(\alpha,\beta) \in [0,1]^2$, the CP bounds remain valid
    due to the decreasing property of R\'enyi entropy with the index.
  \item[$\diamond$] For $(\alpha,\beta) \in \Rset_+^{\,2}$ and $N=2$, we derived
    recently  the $T$-optimal  bound  $\overline{\B}_{\alpha,\beta;\log}(T)$. It
    depends  only  on the  overlap,  so  that it  is  $c$-optimal  as well,  and
    $\overline{\B}_{\alpha,\beta;\log}(T)                                       =
    \widetilde{\B}_{\alpha,\beta;\log;2}(c)$~\cite{ZozBos13}.   Note  that  this
    equality is trivial since only  $c$ parametrizes all the $|T_{ij}|$ and that
    in this  case the  phases play  no role (due  to the  symmetry of  the Bloch
    sphere or from the  $Z-Y$ decomposition for a single qubit~\cite{ZozBos13}).
    Numerical solutions have been found in  the whole quadrant, and we have been
    able  to derive  analytical expressions  in some  regions. In  addition, the
    states that  correspond to the  bound were obtained,  in terms of  the whole
    matrix $T$.
\end{itemize}
\item Tsallis entropies:
  \begin{itemize}
  \item[$\diamond$]  For $\beta  = \alpha$  and pure  states, the  inequality $$
    S_\alpha \big( p(A,\rho) \big) + S_\alpha \big( p(B,\rho) \big) + (1-\alpha)
    S_\alpha  \big( p(A,\rho)  \big) S_\alpha  \big( p(B,\rho)  \big) \,  \ge \,
    \frac{1 - \left( \frac{1+c}{2} \right)^{2 (\alpha-1)}}{\alpha-1} $$ has been
    derived  in  Ref.~\cite{PorPla96}.   This   relation  can  be  viewed  as  a
    consequence of the fact that the sum of R\'enyi entropies with equal indices
    is    lower    bounded    by    the    Deutsch    bound,    together    with
    relation~\eref{RenyiToTsallisMapping:eq} linking  $S_\alpha$ and $R_\alpha$.
    This bound has been refined  to $\frac{1 - c^{2 (\alpha-1)}}{\alpha-1}$ when
    $\alpha \in \left[ \frac12 \; 1 \right]$, starting from the MU inequality in
    the  conjugacy curve,  and using  the decreasing  property of  $R_\alpha$ vs
    $\alpha$, and relation~\eref{RenyiToTsallisMapping:eq}.
  \item[$\diamond$]  For  $(\alpha,\beta) \in  \C$,  following  recent works  of
    Rastegin \cite{Ras11:03, Ras11:11}, one can obtain the inequality $$S_\alpha
    \big(  p(A,\rho) \big)  +  S_\beta \big(  p(B,\rho)  \big) \ge  \frac{1-c^{2
        (\lambda-1)}}{\lambda-1}   \equiv   \B_{\alpha,\beta;\id-1}^R(c)   \quad
    \mbox{with}  \quad \lambda  = \max\{\alpha,\beta\}$$  ($\id$ stands  for the
    identity function, $\id(x)=x$).
  \item[$\diamond$]    For    $(\alpha,\beta)    \in   \underline{\C}$,    bound
    $\B_{\alpha,\beta;\id-1}^R(c)$ remains valid  due to the decreasing property
    of Tsallis entropy vs the entropic index.
  \item[$\diamond$] For $(\alpha,\beta) \in  [0,1]^2$, MU, Deutsch and CP bounds
    remain  valid due to  the decreasing  property of  Tsallis entropy  with the
    index.
  \end{itemize}

  One can find  in the literature many bounds improving  the above mentioned, in
  special contexts  (particular overlap and/or  particular pair of  indices). We
  refer the interested reader  to \cite{San98, Lui07, Ras11:03, BosPor12, Ras12,
    BosPor13, BiaRud10, DehLop10:12, RudPuc14}.  For the sake of completeness of
  this  short review, it  is worth  mentioning that  there is  a new  insight of
  entropic uncertainty  relations that allows the  observer to have  access to a
  quantum   memory~\cite{BerChr10,   ColYu11,   TomRen11,  ColCol12,   MulDup13,
    FraLie13}.  Also, there exist entropic  formulations of the~UP for more than
  two  measurements (in  particular, for  mutually  unbiased bases)~\cite{Iva92,
    San95,  BalWeh07,  WuYu09, WehWin10}  and  for  observables with  continuous
  spectra~\cite{Bia06,  Hua11,  FraLie12,  BerChr13}.   These topics  have  many
  applications in different issues of quantum information such that entanglement
  detection,  proof of  the  security of  quantum  cryptographic protocols,  and
  others~\cite{Gio04,  GuhLew04,  Hua10,  NgBer12,  FurFra12,  SchBro14}.   Such
  studies go beyond the scope of the present paper.

  Finally, it can be shown that some bounds and relations discussed above can be
  expressed  in terms  of the  generalized entropies  of the  family $F_\lambda$
  (with  a common  function $f$  for both  entropies, but  any pair  of entropic
  indices):
\item $F_\lambda$ entropies:
  \begin{itemize}
  \item[$\diamond$] For  $(\alpha,\beta) \in  \C \cup \underline{\C}$,  with the
    additional  condition that  \ $x  f'(x)$ is  increasing, following  the same
    approach as that of Rastegin in Ref.~\cite{Ras11:03, Ras11:11} and using the
    decreasing  property  of  $F_\lambda$   vs  $\lambda$,  one  can  prove  the
    relation $$F_\alpha  \big( p(A,\rho) \big)  + F_\beta \big(  p(B,\rho) \big)
    \ge       \frac{f\left(c^{2      (\lambda-1)}\right)}{1-\lambda}      \equiv
    \B_{\alpha,\beta;f}^R(c)     \quad    \mbox{with}     \quad     \lambda    =
    \max\{\alpha,\beta\},$$ which includes as particular cases the results of MU
    and of Rastegin.
  \item[$\diamond$]   For  $\beta  =   \alpha  \ge   1$:  since   $F_\alpha$  is
    Schur-concave, the Corollary 2 of Ref.~\cite{PucRud13} allows us to derive a
    $T$-dependent bound  for $F_\alpha \big( p(A,\rho)  \otimes p(B,\rho) \big)$
    where  $\otimes$ denotes  the Kronecker  product\footnote{$[p_1  \:\: \cdots
      \:\: p_N]^t  \otimes [q_1 \:\:  \cdots \:\: q_M]^t  \: = \: [p_1  q_1 \:\:
      \cdots  \:\:  p_1 q_M  \:\:  \cdots  \:\: p_N  q_1  \:\:  \cdots \:\:  p_N
      q_M]^t$}. If $f(x) + f(y) \le f(xy)$ for $0 \le x, y \le 1$ then $F_\alpha
    \big( p(A,\rho)  \big) + F_\alpha  \big( p(B,\rho) \big) \ge  F_\alpha \big(
    p(A,\rho)  \otimes p(B,\rho)  \big)$. Applying  the  results of  PRZ to  the
    right-hand  side we obtain  a bound  for the  sum of  $F_\lambda$ entropies.
    R\'enyi and Tsallis  entropies with entropic index greater  than or equal to
    one are particular cases.
  \item[$\diamond$]  For $\beta =  \alpha \le  1$: from  the Schur  concavity of
    $F_\lambda$  we  have  again  a  $T$-dependent  bound  for  $F_\alpha  \big(
    p(A,\rho) \otimes p(B,\rho) \big)$. Now, if  $f(x) + f(y) \ge f(xy)$ for $x,
    y \ge 1$, one has $F_\alpha \big( p(A,\rho) \big) + F_\alpha \big( p(B,\rho)
    \big)  \ge F_\alpha \big(  p(A,\rho) \otimes  p(B,\rho) \big)$  (notice that
    Tsallis entropy does not fulfill this property in this case). Therefore, PRZ
    results applied  to the right-hand side  allows again to obtain  a bound for
    the sum of  this class of entropies.  R\'enyi  entropies with entropic index
    lower than or equal to one are particular cases.
  \item[$\diamond$] For $(\alpha,\beta) \in  [0,1]^2$, MU, Deutsch and CP bounds
    remain valid due to the  decreasing property of the entropy $F_\lambda$ with
    the index.
\end{itemize}
\end{itemize}


\section{Generalized entropic uncertainty relations}
\label{Main:sec}

We  extend results  summarized  in the  preceding  section for  POVM pairs,  and
generalized  entropies~\eref{SalicruEnt:eq}  with  arbitrary pairs  of  entropic
functionals $(h_A,\phi_A)$ and $(h_B,\phi_B)$.   Our approach follows that of de
Vicente  and  S\'anchez-Ruiz~\cite{VicSan08} except  that  here the  concomitant
optimization problem is  mainly solved in a geometrical way.   This allows us to
generalize the results to arbitrary  entropic functionals.  Moreover, we use the
fact that the Landau--Pollak inequality applies for POVM pairs and for both pure
and  mixed states~\cite{BosOsa14, BosZoz14}  to argue  that our  results include
these situations.

Our   major   results   are    given   by   the   following   Proposition,   and
Corollaries~\ref{CorollaryQubit:cor},~\ref{ImproveDeutsch:cor}
and~\ref{NoImprovedMU:cor}:
\begin{proposition}\label{CEUR:prop}
  Let us  consider a pair of  POVM $A = \{  A_i \}_{i=1}^{N_A}$ and $B  = \{ B_j
  \}_{j=1}^{N_B}$ acting on an  $N$-dimensional Hilbert space $\H$, and consider
  a quantum system described by a  density operator $\rho$ acting on $\H$.  Then
  for generalized entropies of the form~\eref{SalicruEnt:eq}, with any two pairs
  of  entropic  functionals  $(h_A,\phi_A)$  and $(h_B,\phi_B)$,  the  following
  uncertainty relation holds:
\begin{equation}
H_{(h_A,\phi_A)} \big( p(A,\rho) \big) + H_{(h_B,\phi_B)} \big( p(B,\rho) \big)
\ge \B_{(h_A,\phi_A),(h_B,\phi_B)}(\vec{c}(A,B))
\label{FEURs:eq}
\end{equation}
where  the overlap triplet $\vec{c}(A,B)  = \left(  c_A,c_B,c_{A,B}  \right)$ is
given by Eq.~\eref{OverlapSet:eq}, and the lower bound expresses as
\begin{equation}
\hspace{-2.25cm} \B_{(h_A,\phi_A),(h_B,\phi_B)}(\vec{c}) = \left\{\begin{array}{l}
  \displaystyle \D_{(h_A,\phi_A)}(\gamma_A) + \D_{(h_B,\phi_B)}(\gamma_B)
  \qquad \mbox{if} \quad
  \gamma_{A,B} \le \gamma_A + \gamma_B\\[5mm]
  \displaystyle \!\!\min_{\theta \in [\gamma_A, \gamma_{A,B} - \gamma_B]} \left(
  \D_{(h_A,\phi_A)}(\theta) + \D_{(h_B,\phi_B)}(\gamma_{A,B} - \theta) \right)
  \:\: \mbox{otherwise}
  \end{array}\right.
\label{Cab:eq}
\end{equation}
with
\begin{equation}
\gamma_A  \equiv  \arccos  c_A, \quad
\gamma_B  \equiv  \arccos  c_B, \quad
\gamma_{A,B}  \equiv  \arccos  c_{A,B}
\end{equation}
and
\begin{equation}
\D_{(h,\phi)}(\theta) \equiv h \! \left( \left\lfloor
\frac{1}{\cos^2\theta}\right\rfloor \phi \! \left( \cos^2 \theta \right) + \phi
\! \left( 1 - \left\lfloor \frac{1}{\cos^2\theta}\right\rfloor \cos^2\theta
\right) \right)
\end{equation}
where $\lfloor \cdot \rfloor$ indicates the floor part.

\end{proposition}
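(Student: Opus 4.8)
The plan is to strip the state dependence down to the two maximal outcome probabilities and then solve a purely two-dimensional constrained minimization geometrically. First I fix $\rho$ and set $P_A = \max_i p_i(A,\rho)$, $P_B = \max_j p_j(B,\rho)$, together with $\theta_A = \arccos\sqrt{P_A}$ and $\theta_B = \arccos\sqrt{P_B}$, so that $\cos^2\theta_A = P_A$ and $\cos^2\theta_B = P_B$. The whole argument then hinges on a single majorization lemma, followed by the Landau--Pollak inequality and an elementary optimization in the $(\theta_A,\theta_B)$-plane.

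The core lemma is that, among all probability vectors whose largest entry equals a prescribed $P$, the generalized entropy $H_{(h,\phi)}$ is minimized by the maximally concentrated vector
\[
p^\star(P) = \big(\, \underbrace{P,\ldots,P}_{m}, \; 1-mP, \; 0, \ldots \,\big), \qquad m = \lfloor 1/P \rfloor .
\]
To establish it I would note that the ordered partial sums of any such $p$ obey $\sum_{k=1}^{\ell} p^\downarrow_k \le \min(\ell P, 1)$, and that $\min(\ell P,1)$ is exactly the $\ell$-th ordered partial sum of $p^\star(P)$ (here $1-mP<P$ keeps $p^\star$ ordered, and $m+1$ never exceeds $N_A$ since $P_A \ge 1/N_A$). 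Hence $p \prec p^\star(P)$, and the Schur-concavity of $H_{(h,\phi)}$ recorded above gives $H_{(h,\phi)}(p) \ge H_{(h,\phi)}(p^\star(P)) = \D_{(h,\phi)}(\arccos\sqrt P)$, the last equality being a direct evaluation using $\phi(0)=0$. Applying this to both observables yields $H_{(h_A,\phi_A)}(p(A,\rho)) + H_{(h_B,\phi_B)}(p(B,\rho)) \ge \D_{(h_A,\phi_A)}(\theta_A) + \D_{(h_B,\phi_B)}(\theta_B)$.

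Next I collect the constraints on $(\theta_A,\theta_B)$. From $P_A = \max_i \Tr(A_i\rho) \le \max_i \|\sqrt{A_i}\|^2 = c_A^2$ we get $\theta_A \ge \gamma_A$, and symmetrically $\theta_B \ge \gamma_B$; the coupling constraint is the Landau--Pollak inequality $\theta_A + \theta_B \ge \gamma_{A,B}$, which holds for POVM pairs and for both pure and mixed states~\cite{BosOsa14, BosZoz14}. Thus, for the actual $\rho$, the pair $(\theta_A,\theta_B)$ lies in the compact feasible set $\mathcal{F} = \{\theta_A \ge \gamma_A,\ \theta_B \ge \gamma_B,\ \theta_A+\theta_B \ge \gamma_{A,B}\} \cap [0,\pi/2]^2$, so the entropy sum is bounded below by the state-independent quantity $\min_{\mathcal{F}}\big(\D_{(h_A,\phi_A)}(\theta_A) + \D_{(h_B,\phi_B)}(\theta_B)\big)$.

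It remains to evaluate this minimum, and the one analytic input is that $\theta \mapsto \D_{(h,\phi)}(\theta)$ is nondecreasing. This follows from the same lemma: if $P_1 > P_2$ then $\min(\ell P_1,1) \ge \min(\ell P_2,1)$ for every $\ell$, so $p^\star(P_2) \prec p^\star(P_1)$, and Schur-concavity gives $\D(\arccos\sqrt{P_2}) \ge \D(\arccos\sqrt{P_1})$; note this route renders the floor function harmless, with no need to differentiate across the thresholds $\cos^2\theta = 1/m$. Since both summands are nondecreasing, the minimum over $\mathcal{F}$ sits on its lower-left (Pareto) boundary. If $\gamma_A + \gamma_B \ge \gamma_{A,B}$, the corner $(\gamma_A,\gamma_B)$ is feasible and componentwise minimal, giving the first branch of~\eref{Cab:eq}; otherwise that corner is cut off by the Landau--Pollak line and the boundary reduces to the segment $\theta_A+\theta_B = \gamma_{A,B}$ with $\theta_A \in [\gamma_A,\gamma_{A,B}-\gamma_B]$, which produces the one-dimensional minimization of the second branch. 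I expect the main obstacle to be precisely this geometric case analysis: one must argue that monotonicity forces the optimizer onto the boundary and determine when the Landau--Pollak constraint becomes active, since it is exactly the switch between an inactive and an active coupling constraint that generates the two regimes of Eq.~\eref{Cab:eq}.
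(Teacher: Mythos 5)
Your proposal is correct, and it shares the paper's overall architecture: first reduce each entropy to its minimum at fixed maximal probability, then minimize the resulting sum over the region allowed by the Landau--Pollak inequality, with the two branches of Eq.~\eref{Cab:eq} arising according to whether the corner $(\gamma_A,\gamma_B)$ is feasible (your step-2 case analysis is sound and matches the paper's). Where you genuinely diverge is in the proof of the key first-step lemma. The paper treats the fixed-maximum minimization as a strictly concave (or convex) optimization over the convex polytope $\PT_{\!P} = \HC_P \cap \HP_P$: it invokes the fact that such a function attains its extremum only at extreme points, and then explicitly enumerates the vertices of the hypercube--hyperplane intersection to identify the minimizing vector. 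You instead prove directly that every probability vector $p$ with $\max_k p_k = P$ satisfies $p \prec p^\star(P)$, because the ordered partial sums of any such $p$ are bounded by $\min(\ell P, 1)$, which are exactly the partial sums of $p^\star(P)$; Schur-concavity then finishes the job. The paper mentions this majorization argument only as an \emph{a posteriori} verification, but you are right that it can carry the entire proof, and it is more economical: it bypasses the polytope geometry altogether, and it also delivers the monotonicity of $\D_{(h,\phi)}$ in one stroke (via $p^\star(P_2)\prec p^\star(P_1)$ for $P_2<P_1$, valid across the thresholds $P=\frac1m$), whereas the paper argues interval by interval on $\left[\frac{1}{M+1}\;\frac1M\right]$ and then appeals to continuity. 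What the paper's longer route buys is extra information: it characterizes \emph{all} minimizers (the permutations of $p^\star(P)$, i.e., the polytope's vertices), which Schur-concavity alone does not give since it yields the inequality without the equality cases. Both routes then conclude identically with the monotone two-variable minimization over the Landau--Pollak domain.
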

\begin{proof}
See~\ref{Proposition:app}.
\end{proof}

For the  sake of simplicity, when  dealing with $F_\lambda$  entropies (with the
same function $f$ for both observables), the bound is simply denoted
\begin{equation}
 \B_{\alpha,\beta;f} \equiv \B_{\left( \frac{f}{1-\alpha} , \id^\alpha \right) ,
\left( \frac{f}{1-\beta} , \id^\beta \right)}
\end{equation}

Let us note the following facts:
\begin{itemize}
\item  $\B_{(h_A,\phi_A),(h_B,\phi_B)}(\vec{c})$  is  explicitly independent  of
  $\vec{N} = (N_A,N_B,N)$.
\item Previous results  in the literature, in particular that  of de Vicente and
  S\'anchez-Ruiz \cite{VicSan08}, are extended here from Shannon to more general
  $(h,\phi)$-entropies,  the  former  being  recovered  as  a  particular  case.
  Moreover, our result applies in the POVM framework and for both pure and mixed
  states.
\item  For Tsallis entropies  with $\beta  = \alpha$,  it is  straightforward to
  obtain relations of the type
$$
S_\alpha \big( p(A,\rho) \big) + S_\alpha \big( p(B,\rho) \big) + (1-\alpha)
S_\alpha \big( p(A,\rho) \big) S_\alpha \big( p(B,\rho) \big) \ge \frac{1 -
e^{(1-\alpha) \, \B_{\alpha,\alpha;\log}(\vec{c})}}{\alpha-1}
$$
that improve and generalize the findings in~\cite{PorPla96} and is valid for all
positive entropic index.
\end{itemize}

Note   that,    except   when   $\gamma_{A,B}   \le    \gamma_A   +   \gamma_B$,
bound~\eref{Cab:eq} is implicit. This is also the case for several bounds in the
literature~\cite{VicSan08,  PucRud13,  ColPia14}.   But, as  for~\cite{VicSan08,
  ColPia14},  the problem  is shown  to  be reduced  to an  optimization on  one
parameter over  a bounded  interval, instead of  on $N(N-2)$  parameters. Notice
that  from  the  increasing  property  of  $\D_{(h,\phi)}(\theta)$  vs  $\theta$
(see~\ref{Proposition:app}), an explicit lower bound can be obtained:
\begin{corollary}
Whatever the overlaps triplet be, bound~\eref{Cab:eq} satisfies
\begin{equation}
\B_{(h_A,\phi_A),(h_B,\phi_B)}(\vec{c})   \ge  \D_{(h_A,\phi_A)}(\gamma_A)  +
  \D_{(h_B,\phi_B)}(\gamma_B).
\end{equation}
Thus the  expression on the  right hand side  lower bounds the entropy  sum even
when $\gamma_{A,B} > \gamma_A + \gamma_B$.
\end{corollary}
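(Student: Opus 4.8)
The plan is to prove the corollary directly from the two-case definition of the bound in Eq.~\eref{Cab:eq}, exploiting the claimed monotonicity of $\D_{(h,\phi)}(\theta)$. First I would recall the key structural fact, asserted in the Proposition's proof (to be established in \ref{Proposition:app}), that $\theta \mapsto \D_{(h,\phi)}(\theta)$ is an increasing function on its domain. Since $\gamma_A = \arccos c_A$, $\gamma_B = \arccos c_B$, and $\gamma_{A,B} = \arccos c_{A,B}$, and since the overlaps satisfy $c_{A,B} \le \min\{c_A, c_B\}$ (the overlap between the pair cannot exceed the individual overlaps), the arccosine being decreasing gives $\gamma_{A,B} \ge \max\{\gamma_A, \gamma_B\}$. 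This sign information on the angles is what drives the argument.

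The proof then splits according to the two branches of Eq.~\eref{Cab:eq}. In the first branch, where $\gamma_{A,B} \le \gamma_A + \gamma_B$, the bound equals $\D_{(h_A,\phi_A)}(\gamma_A) + \D_{(h_B,\phi_B)}(\gamma_B)$ exactly, so the asserted inequality holds with equality and nothing remains to check. The substance is in the second branch, where the bound is the minimum of $\D_{(h_A,\phi_A)}(\theta) + \D_{(h_B,\phi_B)}(\gamma_{A,B} - \theta)$ over $\theta \in [\gamma_A,\, \gamma_{A,B} - \gamma_B]$. Here I would bound the minimand from below termwise: for any admissible $\theta$ we have $\theta \ge \gamma_A$, so by monotonicity $\D_{(h_A,\phi_A)}(\theta) \ge \D_{(h_A,\phi_A)}(\gamma_A)$; and we have $\theta \le \gamma_{A,B} - \gamma_B$, equivalently $\gamma_{A,B} - \theta \ge \gamma_B$, so again by monotonicity $\D_{(h_B,\phi_B)}(\gamma_{A,B} - \theta) \ge \D_{(h_B,\phi_B)}(\gamma_B)$. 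Adding these two inequalities shows that the minimand is bounded below by $\D_{(h_A,\phi_A)}(\gamma_A) + \D_{(h_B,\phi_B)}(\gamma_B)$ uniformly in $\theta$, hence so is its minimum, giving the claimed inequality in the second branch.

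Combining the two branches yields the stated lower bound in all cases, and the closing remark that the right-hand side lower-bounds the entropy sum even when $\gamma_{A,B} > \gamma_A + \gamma_B$ follows by chaining this corollary with the Proposition's inequality \eref{FEURs:eq}. The only genuine prerequisite is the monotonicity of $\D_{(h,\phi)}$, which I expect to be the main obstacle if it had to be proved here; however, since it is explicitly deferred to \ref{Proposition:app} and may be assumed, the corollary reduces to the elementary termwise estimate above. I would also need to confirm that the optimization interval $[\gamma_A,\, \gamma_{A,B} - \gamma_B]$ is nonempty in the second branch — which is precisely the condition $\gamma_{A,B} \ge \gamma_A + \gamma_B$ defining that branch — so that the minimum is well defined and the argument is not vacuous.
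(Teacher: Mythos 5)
Your proposal is correct and follows essentially the same route as the paper: the paper justifies this corollary precisely by the increasing property of $\D_{(h,\phi)}(\theta)$ in $\theta$ (established at the end of \ref{Proposition:app}), applied termwise to the minimand over $\theta \in [\gamma_A,\, \gamma_{A,B}-\gamma_B]$, with the first branch holding trivially with equality. Your additional checks (non-emptiness of the interval from the branch condition, and chaining with \eref{FEURs:eq} for the final remark) are exactly the implicit steps the paper leaves to the reader.
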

\noindent  Note  however that  this  analytic bound  is  weaker,  and that  when
$\gamma_A = \gamma_B = 0$ it turns out to be trivial.

Finally,  it  is  to be  noticed  that  bound~\eref{Cab:eq}  is in  general  not
$\vec{c}$-optimal.   Indeed, our  method  for solving  the minimization  problem
first treats separately  the contribution of each observable  in the entropy sum
and,  only in  a second  step the  link between  the observables  is  taken into
account  through the Landau--Pollak  inequality.  In  some specific  cases, this
relative weakness disappears, as we see now.

Hereafter, we consider the case  of nondegenerate quantum observables.  In this
case, we have $N_A =  N_B = N$, $c_A = c_B = 1$ ($\gamma_A  = \gamma_B = 0$) and
$c_{A,B}  =   c$  ($\gamma_{A,B}  >   0$  except  when   $c  =  1$),   then  the
bound~\eref{Cab:eq} reduces to
\begin{equation}
\B_{(h_A,\phi_A),(h_B,\phi_B)}(c) = \min_{\theta \in [0, \gamma]}
\left( \D_{(h_A,\phi_A)}(\theta) + \D_{(h_B,\phi_B)}(\gamma - \theta) \right)
\label{Cabnondeg:eq}
\end{equation}
with $\gamma = \arccos c$.

As already  mentioned, bound~\eref{Cabnondeg:eq} is in  general not $c$-optimal.
However, it can  be shown that this bound  does turn out to be  optimal for some
particular  values  of  the  overlap.   This  is  summarized  in  the  following
corollary:
\begin{corollary}
\label{CorollaryQubit:cor}
When   $c    >   \frac{1}{\sqrt2}$    and   $N=2$   or    $N   \geq    4$,   the
bound~\eref{Cabnondeg:eq} is $c$-optimal,
\begin{equation}
\hspace{-22mm}\widetilde{\B}_{(h_A,\phi_A),(h_B,\phi_B);N}(c) =
\widetilde{\B}_{(h_A,\phi_A),(h_B,\phi_B);2}(c) = \min_{\theta \in [0 \;
\gamma]} \left( \D_{(h_A,\phi_A)}(\theta) + \D_{(h_B,\phi_B)}(\gamma-\theta)
\right)
\end{equation}
\end{corollary}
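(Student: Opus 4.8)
The lower-bound direction of the asserted equality comes for free from~\eref{Cabnondeg:eq}: that inequality holds for \emph{every} $T\in\U(N)$ with $c(T)=c$ and every state, so taking the minimum over such $T$ gives $\widetilde{\B}_{\dots;N}(c)\ge\min_{\theta\in[0,\gamma]}\big(\D_{(h_A,\phi_A)}(\theta)+\D_{(h_B,\phi_B)}(\gamma-\theta)\big)$, and the same for $N=2$. Hence the whole content of the corollary is the matching \emph{achievability}, and the plan is to exhibit explicitly one $T$ of overlap $c$ and one pure state attaining the minimum. First I would record the simplification forced by $c>\frac1{\sqrt2}$: then $\gamma=\arccos c\in\big[0,\frac\pi4\big)$, so for every $\theta\in[0,\gamma]$ one has $\cos^2\theta>\frac12$ and $\cos^2(\gamma-\theta)>\frac12$, whence $\lfloor1/\cos^2\theta\rfloor=\lfloor1/\cos^2(\gamma-\theta)\rfloor=1$ and $\D_{(h,\phi)}(\theta)=h\big(\phi(\cos^2\theta)+\phi(\sin^2\theta)\big)$. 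The entropy-minimising distributions underlying $\D$ are therefore the two-outcome vectors $(\cos^2\theta,\sin^2\theta,0,\dots)$, and the right-hand side is manifestly independent of $N$; this is what will make the two minima coincide, reducing everything to achievability inside a qubit subspace.

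Next I would settle the qubit case $N=2$. Let $\theta^\star$ attain the minimum on $[0,\gamma]$. In $\mathbb{C}^2$ take $|a_1\rangle,|a_2\rangle$ canonical, the pure state $|\psi\rangle=(\cos\theta^\star,\sin\theta^\star)^t$, and $|b_1\rangle=(\cos\gamma,\sin\gamma)^t$, $|b_2\rangle=(-\sin\gamma,\cos\gamma)^t$. Then $p(A,\rho)=(\cos^2\theta^\star,\sin^2\theta^\star)$ and, since $\langle b_1|\psi\rangle=\cos(\gamma-\theta^\star)$, also $p(B,\rho)=(\cos^2(\gamma-\theta^\star),\sin^2(\gamma-\theta^\star))$; with $P_A=\max_i p_i(A,\rho)$ and $P_B=\max_j p_j(B,\rho)$ the Landau--Pollak inequality $\arccos\sqrt{P_A}+\arccos\sqrt{P_B}\ge\gamma$ is saturated. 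The entropy sum thus equals $\D_{(h_A,\phi_A)}(\theta^\star)+\D_{(h_B,\phi_B)}(\gamma-\theta^\star)$, i.e.\ the minimum. Finally the transformation matrix is the rotation by $\gamma$, whose entries have moduli $\cos\gamma$ and $\sin\gamma$; because $\gamma<\frac\pi4$ forces $\cos\gamma>\sin\gamma$, its overlap~\eref{Overlap:eq} is exactly $\cos\gamma=c$, proving the $N=2$ equality.

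For $N\ge4$ the idea is to embed this qubit solution without inflating the overlap. I would take $T=R_\gamma\oplus V$ block-diagonal, where $R_\gamma$ is the $2\times2$ rotation above and $V$ is an $(N-2)\times(N-2)$ unitary \emph{all of whose entries have modulus at most $c$}, placing $|\psi\rangle$ in the first two coordinates so that both marginals, and hence the entropy sum, are exactly as in the qubit construction. Since the overlap of a block-diagonal unitary is the maximum of the overlaps of its blocks, $c(T)=\max\{c,\,c(V)\}=c$ as soon as $c(V)\le c$. The existence of such a $V$ is the crux: the smallest overlap attainable by an $M\times M$ unitary is $1/\sqrt M$ (realised by a Fourier or complex-Hadamard matrix), so a suitable $V$ exists iff $1/\sqrt{N-2}\le c$, and for $N\ge4$ we have $N-2\ge2$, whence $1/\sqrt{N-2}\le\frac1{\sqrt2}<c$ automatically. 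This gives $\overline{\B}(T)\le\min_\theta\big(\D_{(h_A,\phi_A)}(\theta)+\D_{(h_B,\phi_B)}(\gamma-\theta)\big)$, hence the reverse inequality, and $\widetilde{\B}_{\dots;N}(c)=\widetilde{\B}_{\dots;2}(c)$.

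The hard part, and the precise reason $N=3$ is excluded, is exactly this last step: the complementary block must be a unitary of overlap $\le c$, and the count $1/\sqrt{N-2}\le\frac1{\sqrt2}$ is what forces $N-2\ge2$. For $N=3$ the complementary block is $1\times1$, necessarily a phase of modulus $1>c$, so housing the state in a $2\times2$ rotation block inevitably drives $c(T)$ to $1$; a genuine gap then remains between the bound and the $c$-optimal value, consistent with the corollary's hypothesis. I would close by noting that the achieving $\rho=|\psi\rangle\langle\psi|$ is pure, so the construction loses nothing relative to the minimisation over all (mixed) states.
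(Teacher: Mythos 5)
Your proof is correct, and its core — the $N\ge4$ step — is exactly the paper's argument: a block-diagonal $T$ whose $2\times2$ block carries the largest-modulus entry and whose complementary $(N-2)\times(N-2)$ block is a unitary with all entries of modulus at most $c$ (possible precisely when $\frac{1}{\sqrt{N-2}}\le c$, hence the exclusion of $N=3$, where that block is a phase of modulus $1$), followed by extending the qubit minimizer by zeros, so that $\phi(0)=0$ keeps the entropy sum unchanged. Where you genuinely diverge is the qubit base case. The paper constructs nothing explicitly there: it invokes its earlier work~\cite{ZozBos13}, arguing that the $T$-optimal qubit bound derived there for R\'enyi entropies extends step by step to arbitrary $(h,\phi)$-entropies, coincides with~\eref{Cabnondeg:eq} when $c>\frac{1}{\sqrt2}$ (so that $\lfloor 1/\cos^2\theta\rfloor=1$), and carries over from pure to mixed states. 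You replace all of this by a direct achievability check: the rotation by $\gamma$ has overlap $\cos\gamma=c$ (using $\gamma<\frac{\pi}{4}$, so $\cos\gamma>\sin\gamma$), the state $(\cos\theta^\star,\sin\theta^\star)^t$ produces marginals whose entropies are exactly $\D_{(h_A,\phi_A)}(\theta^\star)+\D_{(h_B,\phi_B)}(\gamma-\theta^\star)$, and the matching lower bound is already supplied by the Proposition applied to every $T$ with $c(T)=c$. Your route is more elementary and self-contained — it needs nothing beyond the Proposition itself — whereas the paper's route imports the stronger content of~\cite{ZozBos13} ($T$-optimality for every qubit transformation, together with the explicit minimizing states), which the corollary as stated does not actually require.
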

\begin{proof}
See~\ref{CorollaryQubit:app}.
\end{proof}
We suspect that  this corollary is also  valid when $N=3$, but we  have not been
able to prove it yet.

A  consequence of  the corollary  is that,  in  the range  of the  overlap $c  >
\frac{1}{\sqrt 2}$, the bound~\eref{Cabnondeg:eq}  reduces to the qubit case and
improves  all  $c$-dependent bounds  such  as  those  of MU  \cite{MaaUff88}  or
Rastegin  \cite{Ras11:03,   Ras11:11}  in  the  context  of   entropies  of  the
$F_\lambda$     family.      In      particular,     since     $\B^{MU}$     and
$\B_{\alpha,\beta;f}^R(c)$ do not depend on $N$, then $\B_{\alpha,\beta;\log}(c)
\ge \B^{MU}$ \ and \ $\B_{\alpha,\beta;f}(c) \ge \B_{\alpha,\beta;f}^R(c)$ \ for
any  $c  \ge  \frac{1}{\sqrt2}$ and  any  $N  \ge  2$.   Moreover, it  is  shown
in~\cite{ZozBos13}  that, for a  certain range  of entropic  indices and  in the
context  of  R\'enyi  entropies,  this  $c$-optimal bound  takes  an  analytical
expression.

Now, we  particularize the Proposition to  the case of  R\'enyi entropy [setting
$\phi(x) = x^\lambda$ and $f(x) = \frac{\log x}{1-\lambda}$, i.e., $f = \log$ in
the $F_\lambda$  family], which is  mostly used in  the literature of  EURs, and
compare  our  bound with  previous  ones,  as we  detail  in  the following  two
corollaries:
\begin{corollary}\label{ImproveDeutsch:cor}
  In  the context of  R\'enyi entropy,  the bound~\eref{Cabnondeg:eq}  is higher
  than that of Deutsch:
\begin{equation}
\B_{\alpha,\beta;\log}(c) \ge \B^D(c) = - 2 \log \left( \frac{1+c}{2} \right)
\end{equation}
\end{corollary}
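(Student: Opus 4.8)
The plan is to compare the R\'enyi bound~\eref{Cabnondeg:eq} with the bound produced by the min-entropies (infinite index), which turns out to be exactly Deutsch's. The whole argument rests on the monotonicity of the R\'enyi entropy with respect to its index.

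First I would note that, for R\'enyi entropy, the function $\D_{(h,\phi)}(\theta)$ equals the R\'enyi-$\alpha$ entropy of the fixed probability vector $p^*(\theta)$ having $\left\lfloor \frac{1}{\cos^2\theta}\right\rfloor$ entries equal to $\cos^2\theta$ together with the single residual entry $1 - \left\lfloor \frac{1}{\cos^2\theta}\right\rfloor\cos^2\theta$. This vector is independent of $\alpha$ and its largest component is $\cos^2\theta$. Since the R\'enyi entropy is non-increasing in its index, it is bounded below by the min-entropy obtained as $\lambda\to\infty$, which for $p^*(\theta)$ equals $-\log\big(\max_k p^*_k(\theta)\big) = -2\log\cos\theta$. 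Writing $\D_\alpha(\theta)$ for the R\'enyi $\D$ at index $\alpha$, this yields the pointwise inequality
\begin{equation}
\D_\alpha(\theta) \ge -2\log\cos\theta, \qquad \theta \in [0,\gamma],
\end{equation}
valid for every finite $\alpha \ge 0$, and similarly $\D_\beta(\gamma-\theta) \ge -2\log\cos(\gamma-\theta)$.

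Next I would feed these term-by-term bounds into~\eref{Cabnondeg:eq}. As the inequality holds pointwise in $\theta$, it survives the minimization:
\begin{equation}
\B_{\alpha,\beta;\log}(c) \ge \min_{\theta\in[0,\gamma]} \left( -2\log\cos\theta - 2\log\cos(\gamma-\theta) \right) = -2 \max_{\theta\in[0,\gamma]} \log\big( \cos\theta \, \cos(\gamma-\theta) \big).
\end{equation}
It then remains to carry out the explicit trigonometric optimization. Using $\cos\theta\cos(\gamma-\theta) = \frac{1}{2}\big(\cos(2\theta-\gamma) + \cos\gamma\big)$, the product is maximal at the symmetric point $\theta = \frac{\gamma}{2} \in [0,\gamma]$, where it equals $\frac{1}{2}(1+\cos\gamma) = \frac{1+c}{2}$ since $c = \cos\gamma$. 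Substituting gives $\B_{\alpha,\beta;\log}(c) \ge -2\log\frac{1+c}{2} = \B^D(c)$, which is the claim.

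The argument is routine once the interpretation of $\D$ as the entropy of a fixed distribution is available; the only delicate point is the first step, where I must invoke that the min-entropy is the smallest R\'enyi entropy in order to replace the index-dependent $\D_\alpha,\D_\beta$ by the common infinite-index value $-2\log\cos(\cdot)$. After that, the comparison reduces to a single-variable maximization of $\cos\theta\cos(\gamma-\theta)$, whose optimizer $\theta = \frac{\gamma}{2}$ reproduces precisely Deutsch's $\frac{1+c}{2}$.
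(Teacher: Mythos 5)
Your proof is correct and follows essentially the same route as the paper's: both invoke the decreasing property of the R\'enyi entropy in its index to replace the finite-index quantities by the min-entropy value $-2\log\cos(\cdot)$, and then identify the resulting minimum over $\theta$ as being attained at $\theta=\frac{\gamma}{2}$, which yields exactly $-2\log\frac{1+c}{2}=\B^D(c)$. The only cosmetic difference is that you apply the monotonicity pointwise in $\theta$ before minimizing, whereas the paper applies it at the minimizer $P_A$ of the finite-index problem; the two are logically equivalent.
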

\begin{proof}
See~\ref{CorollaryDeutsch:app}.
\end{proof}
This   result   is  particularly   interesting   above   the  conjugacy   curve,
$(\alpha,\beta)  \in  \overline{\C}$, where  the  only $c$-dependent  explicitly
known bound for R\'enyi entropies is precisely $\B^D(c)$.

It  is known  that  the sum  of  R\'enyi entropies  below  the conjugacy  curve,
$(\alpha,\beta) \in  \underline{\C}$, is lower bounded  by MU result.   For $c >
\frac{1}{\sqrt2}$  we   were  able  to   improve  this  bound,  but   for  $c\le
\frac{1}{\sqrt2}$ it is not always the case.  Indeed, we have:
\begin{corollary}\label{NoImprovedMU:cor}
  In the context  of R\'enyi entropy, when $c  \leq \frac12$ and $(\alpha,\beta)
  \in \underline{\C}$, the bound~\eref{Cabnondeg:eq} is lower than that of MU:
\begin{equation}
\B_{\alpha,\beta;\log}(c) \le \B^{MU}(c) = - 2 \log c
\end{equation}
\end{corollary}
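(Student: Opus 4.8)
The plan is to exhibit an explicit value of $\theta$ in the optimization interval $[0,\gamma]$ for which the summand $\D_{\alpha;\log}(\theta)+\D_{\beta;\log}(\gamma-\theta)$ falls at or below $\B^{MU}(c)=-2\log c$; since $\B_{\alpha,\beta;\log}(c)$ is the \emph{minimum} over $\theta\in[0,\gamma]$, any such witness immediately yields the claimed inequality. Here $\gamma=\arccos c$, and because we are in the regime $c\le\frac12$ we have $\gamma\in[\frac{\pi}{3},\frac{\pi}{2})$, which is the slack we will exploit. Recall that in the R\'enyi case $\phi(x)=x^\lambda$, $h(x)=\frac{\log x}{1-\lambda}$, so that
\[
\D_{\lambda;\log}(\theta)=\frac{1}{1-\lambda}\log\!\left(\left\lfloor\tfrac{1}{\cos^2\theta}\right\rfloor\cos^{2\lambda}\theta+\Big(1-\left\lfloor\tfrac{1}{\cos^2\theta}\right\rfloor\cos^2\theta\Big)^{\lambda}\right).
\]

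The natural first step is to choose the endpoint split $\theta=0$, $\gamma-\theta=\gamma$. At $\theta=0$ one has $\cos^2\theta=1$, the floor equals $1$, and the residual term $1-\cos^2\theta=0$ vanishes, so $\D_{\alpha;\log}(0)=\frac{1}{1-\alpha}\log 1=0$. Thus the whole burden falls on $\D_{\beta;\log}(\gamma)$, and it remains to show
\[
\D_{\beta;\log}(\gamma)=\frac{1}{1-\beta}\log\!\left(\left\lfloor\tfrac{1}{c^2}\right\rfloor c^{2\beta}+\big(1-\left\lfloor\tfrac{1}{c^2}\right\rfloor c^2\big)^{\beta}\right)\le -2\log c.
\]
When $c\le\frac12$ we have $\frac{1}{c^2}\ge 4$, so $\lfloor 1/c^2\rfloor\ge 4$; more to the point, the floor is a genuine integer $m\ge 4$ with $m\le 1/c^2$, so $mc^2\le 1$ and the residual $1-mc^2\ge 0$ is a legitimate probability. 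This is exactly the mechanism by which taking $c$ small forces the min-bound to be small: spreading $c^2$ over many ``slots'' $m$ makes the probability vector closer to uniform and hence the entropy larger, yet the \emph{sum} with the trivial zero contribution from the other observable can still sit below the MU value.

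The technical core, and the step I expect to be the main obstacle, is verifying the scalar inequality $\D_{\beta;\log}(\gamma)\le -2\log c$ uniformly over $(\alpha,\beta)\in\underline{\C}$ and over $c\le\frac12$, keeping careful track of the sign of $1-\beta$ (which flips the direction of the logarithm inequality) and of the integer jumps of the floor function. The cleanest route is probably to compare against the worst R\'enyi index in $\underline{\C}$: since $F_\lambda$ is decreasing in $\lambda$ for fixed $p$, the largest value of $\D_{\beta;\log}(\gamma)$ over the admissible $\beta$ is attained in a controllable limit, and it should suffice to check the inequality there and then invoke monotonicity. Concretely, I would (i) write $p=(\underbrace{c^2,\dots,c^2}_{m},\,1-mc^2)$ with $m=\lfloor 1/c^2\rfloor$, (ii) bound its R\'enyi entropy $R_\beta(p)=\D_{\beta;\log}(\gamma)$ from above using either Schur-concavity or the explicit $\lambda\to$ boundary value, and (iii) reduce to showing $m\,c^2+(1-mc^2)\ge c^{2(1-\beta)}\cdot(\text{something})$ after exponentiation, at which point the estimate $mc^2\le 1< c^{-?}$ closes the gap. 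If the endpoint $\theta=0$ turns out not to suffice for the full range $\frac{\pi}{3}\le\gamma<\frac{\pi}{2}$, the fallback is to optimize $\theta$ interior to $[0,\gamma]$ using the increasing property of $\D_{(h,\phi)}$ established in Appendix~\ref{Proposition:app}, but I expect the corner split to already deliver the bound precisely because $c\le\frac12$ pushes the floor to $m\ge 4$.
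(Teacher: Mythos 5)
Your overall strategy (exhibit a witness $\theta$ at which the summand is at most $-2\log c$, then invoke the minimum) is sound, but the witness you chose cannot work: the scalar inequality you reduce everything to is false in general, and in fact holds in the \emph{reverse} direction. At $\theta=0$ the second term is $\D_{\beta;\log}(\gamma)=R_\beta^{\min}(c^2)$, the minimal R\'enyi-$\beta$ entropy over distributions with maximal probability $c^2$. Since R\'enyi entropy is decreasing in its index, for every finite $\beta$ one has
\[
R_\beta^{\min}(c^2)\;\ge\;R_\infty^{\min}(c^2)\;=\;-\log c^2\;=\;\B^{MU}(c),
\]
with equality essentially only when $1/c^2$ is an integer (so that the minimizing vector is uniform). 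Hence the endpoint split always yields a value at or \emph{above} the MU bound, never below it. Concretely, for $c=0.45$ and $\beta=1$ the minimizing vector is $(c^2,c^2,c^2,c^2,1-4c^2)$, whose Shannon entropy is $\approx 1.609$, while $-2\log c\approx 1.597$. Your heuristic is also backwards: spreading $c^2$ over many slots makes that entropy \emph{larger}, pushing the endpoint value further above MU. The fallback you mention (``optimize $\theta$ in the interior'') is where the entire proof actually lives, and you give no argument for it, so the proposal has a genuine gap at its core step.

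The paper's proof uses the symmetric interior witness $\theta=\gamma/2$, i.e.\ $P_A=P_B=\frac{1+c}{2}$. Since $\frac{1+c}{2}>\frac12$, the minimizing vector for each observable has at most two nonzero components, so each term is bounded above by $R_0^{\min}\!\left(\frac{1+c}{2}\right)=\log 2$, using the decreasing property of $R_\lambda^{\min}$ in $\lambda$ (valid for all $\alpha,\beta\ge 0$). Therefore $\B_{\alpha,\beta;\log}(c)\le 2\log 2=\log 4\le -2\log c$, the last step being exactly the hypothesis $c\le\frac12$. Note that this is where the assumption on $c$ enters — through $\log 4 \le -2\log c$ at the symmetric point — and not through the floor $\lfloor 1/c^2\rfloor\ge 4$ at the endpoint, as in your sketch.
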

\begin{proof}
See~\ref{CorollaryMU:app}.
\end{proof}
To the best of our knowledge, in  the range of the overlap $c \leq \frac{1}{2}$,
the  MU result  is the  tightest  $c$-dependent bound  when $(\alpha,\beta)  \in
\underline{\C}$.


\section{Comparison with previously known bounds}
\label{Comparisons:sec}


\subsection{Maassen--Uffink,   Rastegin and Coles--Piani   bounds}

We  now  compare our  bound  with previously  known  ones  in the  nondegenerate
context, for R\'enyi and Tsallis  entropies with indices $(\alpha,\beta)$ in the
region  $\C   \cup\underline{\C}$  or  just  within  $[0   \;  1]^2$.   Relative
differences       are      shown       through       density      plots       in
Figs.~\ref{CotaRenyi:fig},~\ref{CotaTsallis:fig},~\ref{CotaColes_Nw:fig}
and~\ref{CotaColes_Ns:fig},  for  chosen  typical  values of  the  overlap  $c$.
Positivity of these differences indicates that our bound improves the previous.

In   Fig.~\ref{CotaRenyi:fig}   we   plot   $\frac{\B_{\alpha,\beta;\log}(c)   -
  \B^{MU}(c)}{\B_{\alpha,\beta;\log}(c)}$ for entropic  indices in and below the
conjugacy curve,  $(\alpha,\beta) \in \C  \cup \underline{\C}$.  We  observe the
following behavior of our bound with respect to MU result:
\begin{itemize}
\item Up  to $c  = \frac12$  ($c = 0.5$  is shown),  the relative  difference is
  negative   or   zero,   so  our   bound   does   not   improve  the   MU   one
  (Corollary~\ref{NoImprovedMU:cor}).
\item  When $c$  is between  $\frac12$ and  $\frac{1}{\sqrt2}$ ($c  =  0.706$ is
  shown), the relative difference is positive or negative (although very small),
  so   our   bound   improves   the    MU   one   in   some   regions   of   the
  $\alpha$--$\beta$-plane.   This region  is delimited  by the  white  line: the
  improvement  takes place  below  this curve;  we  observe that  the region  of
  improvement increases with the overlap.
\item When $c$ exceeds $\frac{1}{\sqrt2}$ ($c = 0.708$ and $0.9$ are shown), the
  relative   difference   is   positive,   so   our  bound   improves   MU   one
  (Corollary~\ref{CorollaryQubit:cor}); the  improvement significantly increases
  with the overlap.
\end{itemize}

\begin{figure}[htbp]
\centerline{\includegraphics[width=\textwidth]{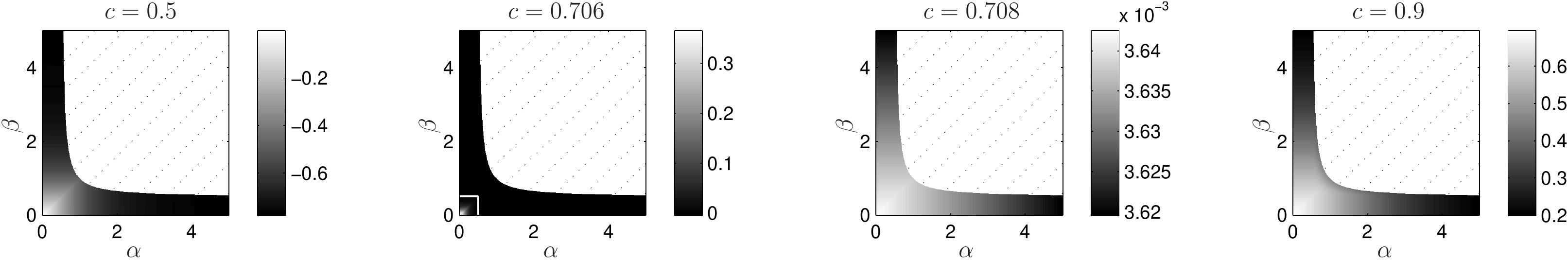}}
\caption{R\'enyi      entropy      case:       density      plots      of      \
  $\frac{\B_{\alpha,\beta;\log}(c)   -  \B^{MU}(c)}{\B_{\alpha,\beta;\log}(c)}$,
  for $(\alpha,\beta) \in  \C \cup \underline{\C}$ when $c  = 0.5, 0.706, 0.708$
  and $0.9$.}
\label{CotaRenyi:fig}
\end{figure}

In    Fig.~\ref{CotaTsallis:fig}    we    plot    the    relative    difference:
$\frac{\B_{\alpha,\beta;\id-1}(c)                                               -
  \B^R_{\alpha,\beta;\id-1}(c)}{\B_{\alpha,\beta;\id-1}(c)}$     for    entropic
indices  in  and  below  the   conjugacy  curve,  $(\alpha,\beta)  \in  \C  \cup
\underline{\C}$.   We observe the  following behavior  with respect  to Rastegin
results:
\begin{itemize}
\item Up to $c = \frac{1}{\sqrt2}$ ($c = 0.5$ and $0.6$ are shown), the relative
  difference is positive or negative, so  our bound improves the Rastegin one in
  some regions of the  $\alpha$--$\beta$-plane. The regions where an improvement
  occurs are outside  the domain marked by the black  line. These regions always
  exists (even when $c < \frac12$) and increases with the overlap.
\item When $c$ exceeds $\frac{1}{\sqrt2}$ ($c = 0.708$ and $0.9$ are shown), the
  relative  difference   is  positive,  so  our  bound   improves  Rastegin  one
  (Corollary~\ref{CorollaryQubit:cor})    and    the    improvement    increases
  significantly with the overlap.
\end{itemize}

\begin{figure}[htbp]
\centerline{\includegraphics[width=\textwidth]{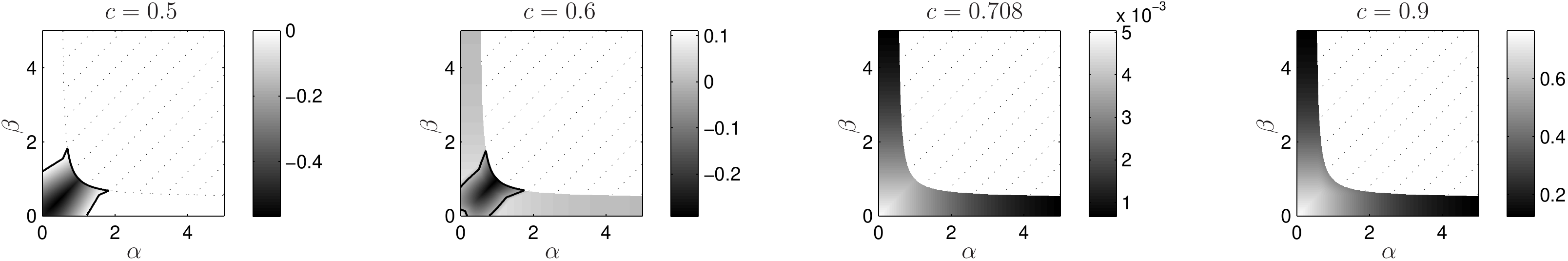}}
\caption{Tsallis      entropy      case:       density      plots      of      \
  $\frac{\B_{\alpha,\beta;\id-1}(c)              -             \B_{\alpha,\beta;
      \id-1}^R(c)}{\B_{\alpha,\beta;\id-1}(c)}$, for  $(\alpha,\beta) \in \C \cup
  \underline{\C}$ when $c = 0.5, 0.6, 0.708$ and $0.9$.}
\label{CotaTsallis:fig}
\end{figure}

In Figs.~\ref{CotaColes_Nw:fig} and~\ref{CotaColes_Ns:fig}  we plot the relative
differences:                   $\frac{\B_{\alpha,\beta;f}(c)                   -
  \B^{CP^\star}(c)}{\B_{\alpha,\beta;f}(c)}$,   for  $f   =  \id$   and  $\log$,
respectively,   where    $\B^{CP^\star}(c)   =\B^{CP}(c,c_2)$   with    $c_2   =
\frac{\sqrt{N-2+c^2}}{N-1}$ being the lowest possible second larger value of the
$|T_{ij}|$ (we  choose here  $N = 3$  and $N  = 10$ respectively);  the entropic
indices are $(\alpha,\beta) \in [0  \; 1]^2$.  We observe the following behavior
with respect to Coles--Piani results:
\begin{itemize}
\item For any value of $c$, the relative difference can be positive or negative,
  so  our  bound   improves  the  Coles--Piani  one  in   some  regions  of  the
  $\alpha$--$\beta$-plane. The regions where an improvement occurs are below the
  domain   marked   by    the   solid   line   in   Figs.~\ref{CotaColes_Nw:fig}
  and~\ref{CotaColes_Ns:fig}.   These regions  generally exist  (even when  $c <
  \frac12$) and  their extension  is greater with  the overlap  (the improvement
  always exists for $c \ge \frac{1}{\sqrt{2}}$).
\item  When  $N$  increases  (and  $c  <  \frac{1}{\sqrt{2}}$),  the  domain  of
  improvement  is smaller.   Remind  however  that the  best  possible CP  bound
  $\B^{CP}$ is plotted here.
\end{itemize}

\begin{figure}[htbp]
\centerline{\includegraphics[width=\textwidth]{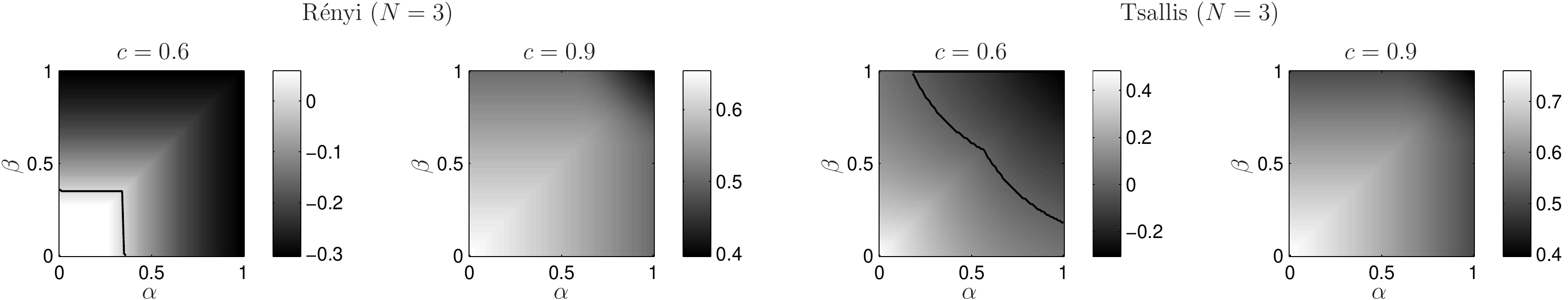}}
\caption{R\'enyi  and Tsallis  entropy cases  for $N  = 3$:  density plots  of \
  $\frac{\B_{\alpha,\beta;f}(c)  -  \B^{CP^*}(c)}{\B_{\alpha,\beta;f}(c)}$,  for
  $(\alpha,\beta) \in [0 \; 1]^2$ when $c = 0.6$ and $0.9$.}
\label{CotaColes_Nw:fig}
\end{figure}

\begin{figure}[htbp]
\centerline{\includegraphics[width=\textwidth]{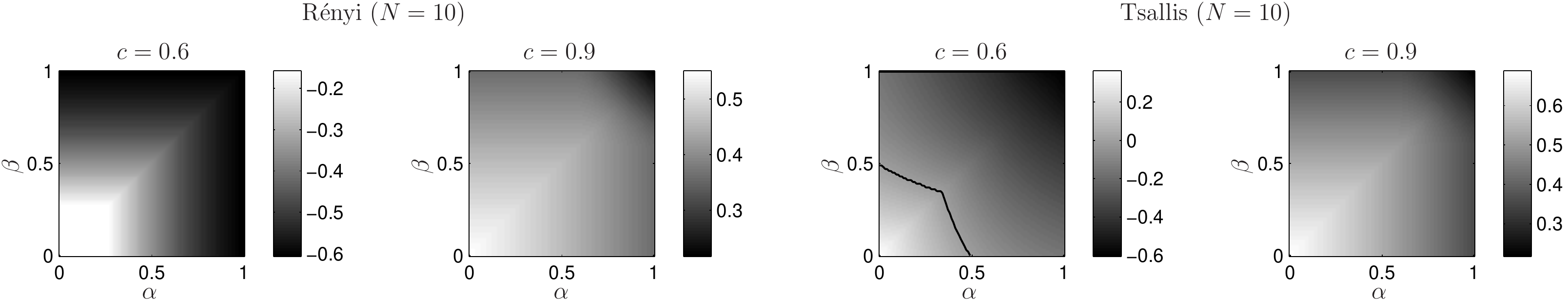}}
\caption{Same as Fig.~\ref{CotaColes_Nw:fig} for $N = 10$.}
\label{CotaColes_Ns:fig}
\end{figure}


\subsection{Bounds  for powers  of a  circular  permutation matrix  in the  line
  $\beta = \alpha$}

An illustrative  example to consider for  the evaluation of  generalized EURs is
given in Ref.~\cite{PucRud13}, where  a special class of transformation matrices
is used. Indeed,  the quantum observables here are  such that the transformation
between their  eigenbases is a  power of a circular  $N$-dimensional permutation
matrix, namely  $T_N(s) = \left[  \begin{array}{cc} 0 &  I_{N-1}\\ 1 &  0 \cdots
    0 \end{array}  \right]^s$ \  with $s  \in \left[ 0  \; \frac12  \right]$ and
where $I_{N-1}$  denotes the $(N-1)  \times (N-1)$ identity matrix.   We compute
our  bound in  these cases  for $N  =  3$ and  for some  chosen, equal  entropic
indices, and we  compare our results with  the bounds of PRZ, MU  and Deutsch in
the case of R\'enyi entropy (Fig.~\ref{PowerPermutationRenyi:fig}), and with the
bounds   of   Rastegin,  CP   and   PRZ  in   the   case   of  Tsallis   entropy
(Fig.~\ref{PowerPermutationTsallis:fig}).  In  this particular example, $T_N(s)$
can be analytically determined, allowing  for an analytic expression for both CP
bounds $\B^{CP}$ and $\B^{\overline{CP}}$.   It appears that, whatever $N$, both
bounds coincide and that they coincide with the MU bound.

In Fig.~\ref{PowerPermutationRenyi:fig}  we plot the  bounds $\B_{\alpha,\alpha;
  \log}(c)$,  $\B_{\alpha;\log}^{PRZ}(T)$, $\B^{MU}(c)$  and  $\B^D(c)$ for  the
R\'enyi  entropic formulation  of  the~UP, in  terms  of the  power  $s$ in  the
transformation  matrix,  when  $\alpha=0.8$  and $1.4$.   The  overlap  $c=c(s)$
corresponding to  the transformation $T=T(s)$ is  also shown in  the figure.  We
observe that:
\begin{itemize}
\item For  $\alpha = 0.8$  our bound improves  both PRZ and  MU ones for  a wide
  range of values of $s$. The fact that  our bound can be lower than that of PRZ
  for      $c       >      \frac{1}{\sqrt2}$      does       not      contradict
  Corollary~\ref{CorollaryQubit:cor}.   Indeed, the  PRZ bound  is $T$-dependent
  and is evaluated here for a particular $T$; it is not the minimum over all $T$
  for a given $c$.
\item     For    $\alpha=1.4$    our     bound    improves     Deutsch    result
  (Corollary~\ref{ImproveDeutsch:cor}) as well as PRZ for all $s$.
\end{itemize}

\begin{figure}[htbp]
\begin{tabular}
{
>{}m{.35\textwidth}
>{}m{.61\textwidth}
}
\includegraphics[height=3.75cm]{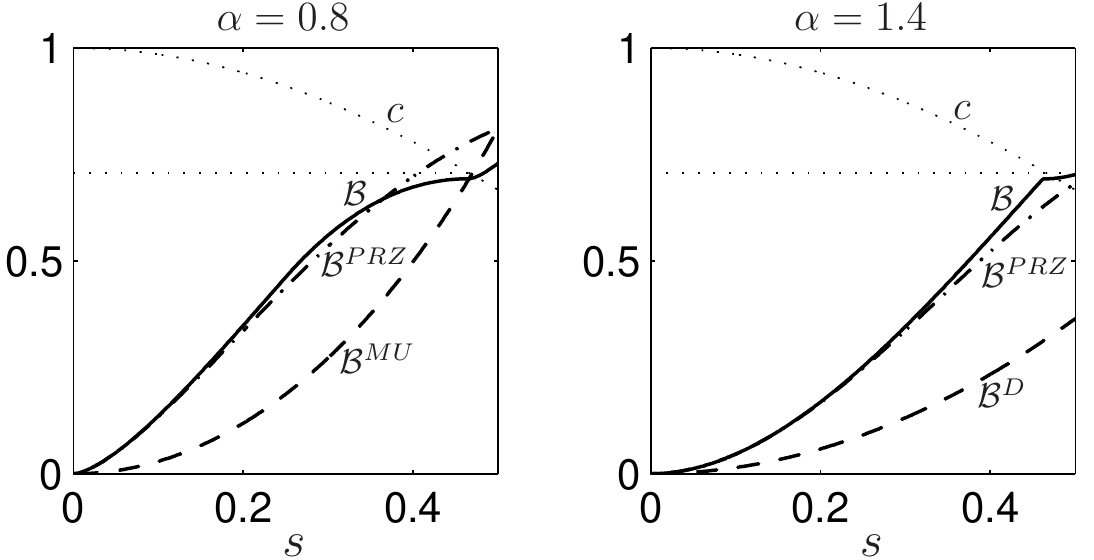}
&
\caption{R\'enyi  entropy case:  bounds  $\B \equiv  \B_{\alpha,\alpha;\log}(c)$
  (solid  line),   $\B^{PRZ}  \equiv  \B_{\alpha;\log}^{PRZ}(T)$  (dashed-dotted
  line), $\B^{MU} \equiv  \B^{MU}(c)$ (left plot, dashed line)  and $\B^D \equiv
  \B^D(c)$  (right  plot,  dashed line),  in  terms  of  the  power $s$  in  the
  transformation matrix for $\alpha = 0.8$  and $1.4$.  In addition, we plot the
  overlap $c$ in terms of $s$ (dotted line). }
\label{PowerPermutationRenyi:fig}
\end{tabular}
\end{figure}

In     Fig.~\ref{PowerPermutationTsallis:fig}     we     plot     the     bounds
$\B_{\alpha,\alpha;\id-1}(c)$,                   $\B_{\alpha;\alpha;\id-1}^R(c)$,
$\B^{\overline{CP}} = \B^{CP}  = \B^{MU}$, and $\B_{\alpha;\id-1}^{PRZ}(T)$, for
the Tsallis  entropic formulation of  the~UP, in terms  of the power $s$  in the
transformation matrix, when $\alpha=0.8$ and $1.4$. We observe that:
\begin{itemize}
\item For $\alpha=0.8$ our bound improves both Coles--Piani and Rastegin ones in
  a wide range of values of $s$.
\item For $\alpha=1.4$ our bound improves PRZ one for all $s$.
\end{itemize}


\begin{figure}[htbp]
\begin{tabular}
{
>{}m{.35\textwidth}
>{}m{.61\textwidth}
}
\includegraphics[height=3.75cm]{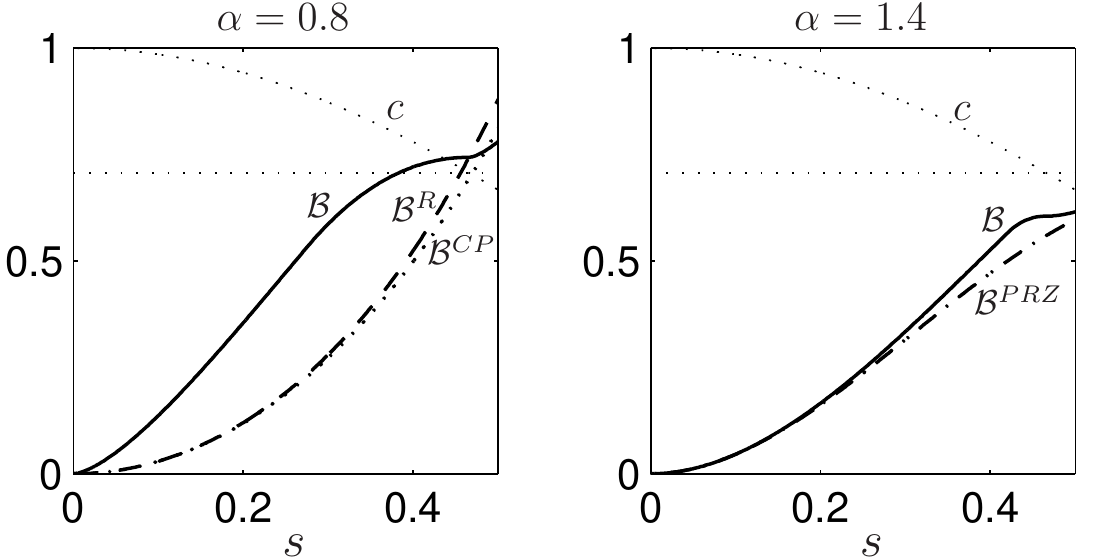}
&
\caption{Tsallis  entropy   case:  bounds  $\B_{\alpha,\alpha;\id-1}(c)$  (solid
  line), $\B_{\alpha,\alpha;\id-1}^R(c)$ (left  plot, dashed line), $\B^{CP}(T)$
  (left  plot,  dotted  line  below  that  of  $\B_{\alpha,\alpha;\id-1}^R(c)$),
  $\B_{\alpha;\id-1}^{PRZ}(T)$ (right plot, dashed-dotted line), in terms of the
  power  $s$  in the  transformation  matrix  for  $\alpha=0.8$ and  $1.4$.   In
  addition, we plot the overlap $c$ in terms of $s$ (dotted line).}
\label{PowerPermutationTsallis:fig}
\end{tabular}
\end{figure}


\subsection{Bounds   for   randomly  drawn   unitary   matrices   in  the   line
  $\beta=\alpha$}

As a further  example, we randomly generate $10^4$  unitary matrices $T$ sampled
according to  a Haar (uniform)  distribution on $\U(3)$  \cite{ZycKus94, Mez07}.
We compute our bound in these cases for some chosen, equal entropic indices, and
we compare  our results with the  bounds of PRZ, MU  and Deutsch in  the case of
R\'enyi entropy  (Fig.~\ref{RandomRenyi:fig}), with  the bounds of  Rastegin and
PRZ  in the  case of  Tsallis entropy  (Fig.~\ref{RandomTsallis:fig}),  and with
$\B^{\overline{CP}}$ in both cases (Fig.~\ref{RandomColes:fig}).

In Fig.~\ref{RandomRenyi:fig}  we plot the  bounds $\B_{\alpha,\alpha;\log}(c)$,
$\B^{MU}(c)$,  $\B_{\alpha;\log}^{PRZ}(T)$,   and  $\B^D(c)$  for   the  R\'enyi
entropic formulation of the~UP, in terms of the overlap $c\geq\frac{1}{\sqrt3}$,
when $\alpha=0.2, 0.8$ and $1.4$. We observe that:
\begin{itemize}
\item For  $\alpha=0.2$, our  bound improves MU  one in  the whole range  of the
  overlap. We find transformation matrices such that our bound improves PRZ one,
  although with a low frequency of occurrence.
\item For $\alpha=0.8$, our bound improves MU one when $c \geq \frac{1}{\sqrt2}$
  (Corollary~\ref{CorollaryQubit:cor}).   We find  transformation  matrices such
  that our bound improves PRZ one, with a frequency higher than for $\alpha=0.2$
  and increasing with $c$ as well.
\item For $\alpha=1.4$, our bound improves Deutsch one in the whole range of the
  overlap  (Corollary~\ref{ImproveDeutsch:cor}). Again,  we  find transformation
  matrices such  that our bound improves  PRZ one, with a  frequency higher than
  for $\alpha=0.8$ and increasing with $c$ as well.
\end{itemize}

\begin{figure}[htbp]
\centerline{\includegraphics[height=3.75cm]{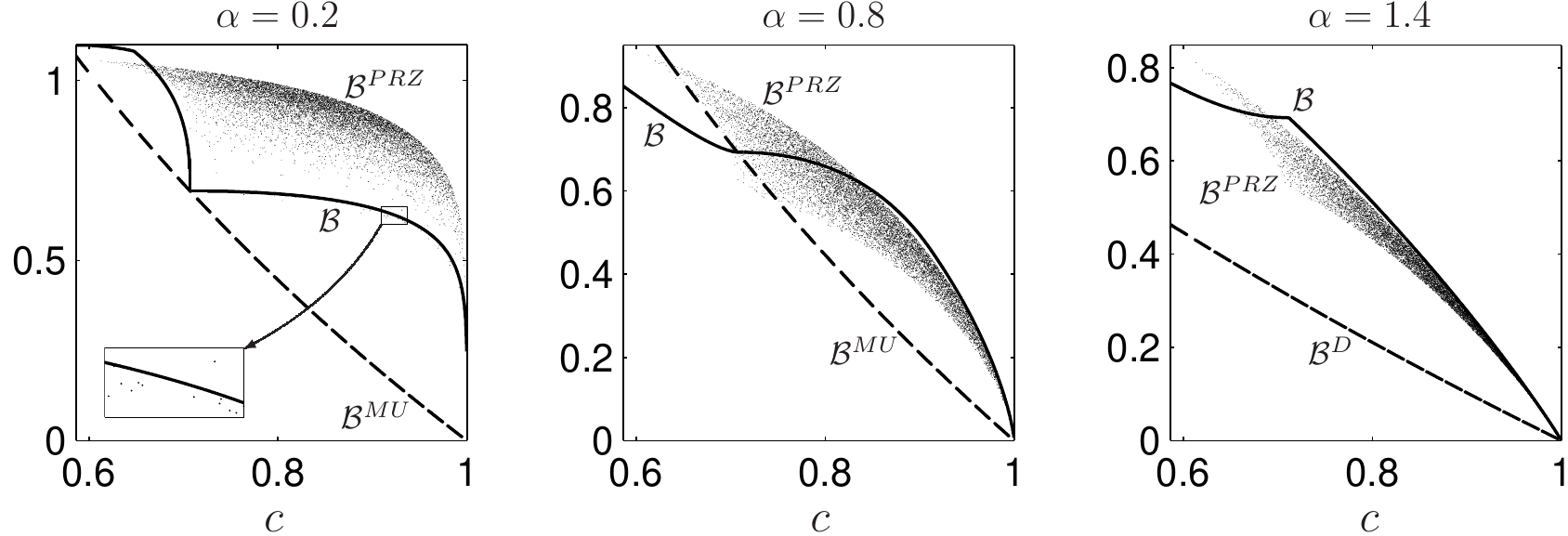}}
\caption{R\'enyi entropy case: bounds $\B_{\alpha,\alpha;\log}(c)$ (solid line),
  $\B^{MU}(c)$     (dashed     line,     left     and    middle     plots)     ,
  $\B_{\alpha;\log}^{PRZ}(T)$ (dots),  and $\B^D(c)$ (dashed  line, right plot),
  in terms of the overlap $c$ for $\alpha=0.2, 0.8$ and $1.4$.}
\label{RandomRenyi:fig}
\end{figure}

In Fig.~\ref{RandomTsallis:fig} we plot the bounds $\B_{\alpha,\alpha;\log}(c)$,
$\B_{\alpha,\alpha;\id-1}^R(c)$, and $\B_{\alpha;\log}^{PRZ}(T)$ for the Tsallis
entropic formulation of the~UP, in terms of the overlap $c\geq\frac{1}{\sqrt3}$,
when $\alpha=1,1.5$ and $2$. We observe that:
\begin{itemize}
\item  For   $\alpha=1$,  our   bound  improves  Rastegin   one  when   $c  \geq
  \frac{1}{\sqrt2}$      (Corollary~\ref{CorollaryQubit:cor}).       We     find
  transformation matrices such that our  bound improves PRZ one, with relatively
  high frequency of occurrence.
\item  For $\alpha=1.5$,  we find  transformation matrices  such that  our bound
  improves PRZ  one in a wider range  for the overlap and  with higher frequency
  than for $\alpha = 1$.
\item  For $\alpha=2$,  for all  the  sampled matrices  we find  that our  bound
  improves PRZ one in the whole range of the overlap.
\end{itemize}

\begin{figure}[htbp]
\centerline{\includegraphics[height=3.75cm]{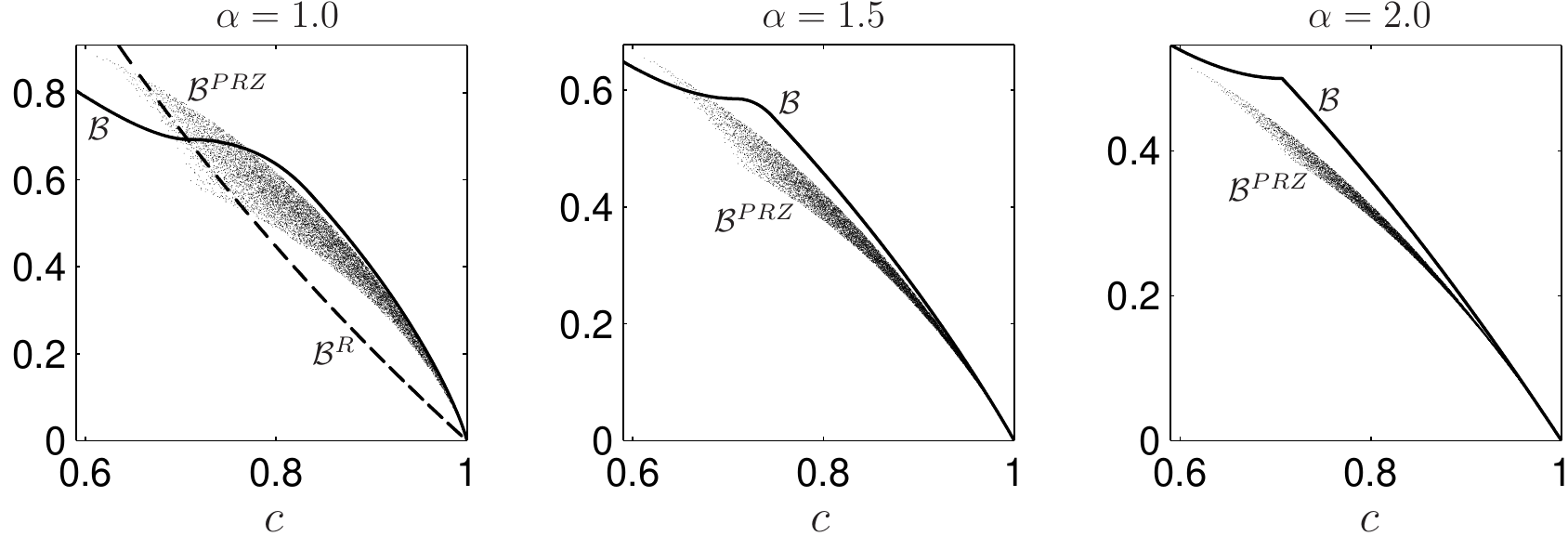}}
\caption{Tsallis  entropy   case:  bounds  $\B_{\alpha,\alpha;\id-1}(c)$  (solid
  line),   $\B_{\alpha,\alpha;\id-1}^R(c)$  (dashed   line,   left  plot),   and
  $\B_{\alpha;\log}^{PRZ}(T)$  (dots),   in  terms   of  the  overlap   $c$  for
  $\alpha=1,1.5$ and $2$.}
\label{RandomTsallis:fig}
\end{figure}

In  Fig.~\ref{RandomColes:fig}  we  plot the  bounds  $\B_{\alpha,\alpha;f}(c)$,
$\B^{MU}(c)$ or $\B_{\alpha,\alpha;\id-1}^R(c)$, and $\B^{\overline{CP}}(T)$ for
both R\'enyi and Tsallis entropic formulation of the~UP, in terms of the overlap
$c\geq\frac{1}{\sqrt3}$, when $\alpha=0.5$ and $1$.  We observe that:
\begin{itemize}
\item For any $\alpha$, our  bound improves $\B^{\overline{CP}}$ in a wide range
  of the overlap $c$.
\item In  the Tsallis  context, for  $\alpha \le \frac12$,  for all  the sampled
  matrices, we find an improvement of $\B^{\overline{CP}}$ in the whole range of
  the  overlap.  We  observe  that  the range  of  values of  $c$  for which  an
  improvement of the CP bound occurs, decreases with $\alpha$.
\end{itemize}

\begin{figure}[htbp]
\centerline{\includegraphics[height=3.75cm]{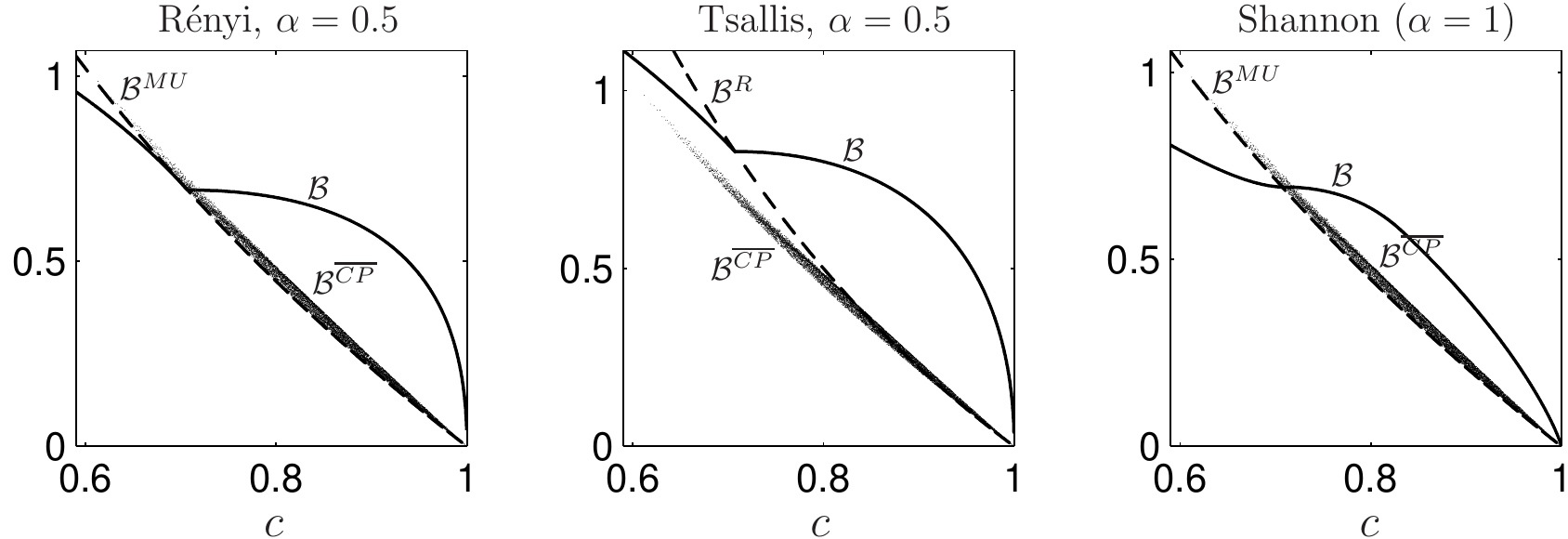}}
\caption{R\'enyi,     Tsallis    and     Shannon    entropy     cases:    bounds
  $\B_{\alpha,\alpha;f}(c)$       (solid       line),      $\B^{MU}(c)$       or
  $\B^{R}_{\alpha,\alpha;\id-1}$ (dashed line) , $\B^{\overline{CP}}(T)$ (dots),
  in terms  of the overlap  $c$ for $\alpha=0.5$  (R\'enyi and Tsallis)  and $1$
  (Shannon).}
\label{RandomColes:fig}
\end{figure}

We  notice that, as  MU, Deutsch,  Rastegin and  our bounds  depend only  on the
overlap $c$, then the same relative behaviors remain valid for dimensions higher
than~3 (at least for $c\geq\frac{1}{\sqrt3}$).  In contrast, that may not be the
case for the relation between CP, PRZ and our bound, since the formers depend on
the  whole transformation  matrix~$T$;  indeed,  we expect  an  increase of  the
predominance of PRZ and CP  over other $c$-dependent bounds.  However, our bound
is easier  to calculate than PRZ  one for instance  whose computation complexity
increases combinatorially with the dimension of the matrix~$T$.


\section{Concluding remarks}
\label{Discussions:sec}

In  this contribution  we  provide  a general  entropy-like  formulation of  the
uncertainty principle, for any pair of POVM  in the case of pure or mixed states
in       finite       dimensions.        The      sum       of       generalized
$(h,\phi)$-entropies~\eref{SalicruEnt:eq} associated to two POVMs is proposed as
measure of joint uncertainty, and lower bounds for that sum are searched for, in
terms of  the overlaps $\vec{c}$ between  the POVM, which in  a sense quantifies
the degree of incompatibility of the observables.  Our main result is summarized
in the  Proposition of Sec.~\ref{Main:sec}, where we  give a $\vec{c}$-dependent
lower bound for  the entropy-sum, leading to the  family of entropic uncertainty
relations~\eref{FEURs:eq}.  To  obtain this, we  follow the same approach  as de
Vicente and  S\'anchez-Ruiz appealing to  the Landau--Pollak inequality,  and we
solve the concomitant constrained  minimization problem, mainly in a geometrical
manner.  In this way, the  calculation of a $\vec{c}$-dependent bound reduces to
the  resolution  of  the  straightforward one-dimensional  minimization  problem
in~\eref{Cab:eq}.

Our uncertainty relation~\eref{FEURs:eq} generalizes previous similar results in
several ways, namely, it is valid for:
\begin{itemize}
\item  Salicr\'u generalized  entropic forms  [including R\'enyi~\eref{Renyi:eq}
  and  Tsallis~\eref{Tsallis:eq} entropies,  which are  obtained for  $\phi(x) =
  x^\lambda$   with   $h(x)   =    \frac{\log   x}{1-\lambda}$   and   $h(x)   =
  \frac{x-1}{1-\lambda}$, respectively],
\item  any  choice for  the  pair  of  entropic functionals  $(h_A,\phi_A)$  and
  $(h_B,\phi_B)$  (overcoming the  limitation due  to the  Riesz--Thorin theorem
  that  involves  conjugated pairs  of  indices  when  dealing with  the  family
  $F_\lambda$~\eref{Flambda:eq}  with the  same  $f$, which  is  mainly used  in
  related literature),
\item any  pair of positive operator valued measures, and
\item  both pure  and mixed  states  (which is  proved without  recourse to  the
  concavity property,  that, for instance,  R\'enyi entropy does not  fulfill in
  general).
\end{itemize}

Besides  we show that,  in the  case of  nondegenerate quantum  observables with
overlap   $c$,   the   bound   reduces  to   the   unidimensional   minimization
problem~\eref{Cabnondeg:eq}.  Moreover,  for values of the  overlap greater than
$\frac{1}{\sqrt2}$, our bound is $c$-optimal and it reduces to that of the qubit
($N=2$) case  (Corollary~\ref{CorollaryQubit:cor}).  In other  words, we improve
all $c$-dependent bounds in that range of the overlap.

In addition, we go further in the case of R\'enyi entropies and we find that our
bound  improves  Deutsch  one in  the  whole  range  of  values of  the  overlap
(Corollary~\ref{ImproveDeutsch:cor}), and  also that our bound  does not improve
Maassen--Uffink one for  values of the overlap lower than  or equal to $\frac12$
(Corollary~\ref{NoImprovedMU:cor}).    The   former   result   is   particularly
interesting  for entropic indices  above the  conjugacy curve  where, up  to our
knowledge,  Deutsch bound is  the only  known one  with an  analytic expression;
whereas  the  latter result  establishes  that  restricting  the domain  by  the
Landau--Pollak  inequality, leads to  a result  weaker than  using Riesz--Thorin
theorem.

Finally, in Sec.~\ref{Comparisons:sec}, we provide several examples that exhibit
an improvement with respect to known  results in the literature, in the cases of
R\'enyi and Tsallis entropies.

The extension of  our approach to take into account quantum  memory and for more
than two POVMs is currently under investigation.


\section*{Acknowledgments}
SZ  and MP  are very  grateful to  the R\'egion  Rh\^one-Alpes (France)  for the
grants that  enabled this work.  MP  and GMB also  acknowledge financial support
from CONICET (Argentina),  and warm hospitality during their  stay at GIPSA-Lab.
The authors thank  Pr. J.-F.  Bercher for useful discussions  about the class of
Salicr\'u  entropies.  The  authors acknowledge  anonymous referees  for helpful
comments.


\appendix


\section{Proof of the Proposition}
\label{Proposition:app}

Our aim is, given the probability vectors $p(A,\rho)$ and $p(B,\rho)$ associated
with   the  POVM   $A$   and  $B$   respectively,   to  minimize   the  sum   of
$(h,\phi)$-entropies subject to the Landau--Pollak inequality.  In this way, our
method    follows    and    advances    on    that    of    de    Vicente    and
S\'anchez-Ruiz~\cite{VicSan08}, and consists of two steps:
\begin{enumerate}
\item Minimization of $H_{(h,\phi)}$ subject to $\max_k p_k = P$.  At this step,
  the  two sets  of probabilities  are  treated separately.   Thus, denoting  by
  $H_{(h,\phi)}^{\min}(P)$ this  minimal entropy, we arrive at  the inequality \
  $H_{(h_A,\phi_A)}  \big( p(A,\rho)  \big) +  H_{(h_B,\phi_B)}  \big( p(B,\rho)
  \big)            \ge           H_{(h_A,\phi_A)}^{\min}(P_{A,\rho})           +
  H_{(h_B,\phi_B)}^{\min}(P_{B,\rho})$ where the right-hand side depends only on
  the two maximal probabilities.
\item     Minimization    of     \     $H_{(h_A,\phi_A)}^{\min}(P_{A,\rho})    +
  H_{(h_B,\phi_B)}^{\min}(P_{B,\rho})$   \   subject   to   the   Landau--Pollak
  inequality.
\end{enumerate}


\subsection{First  step: minimization  of  the $(h,\phi)$-entropy  subject to  a
  given maximum probability}

This  problem involves  looking for  the vectors  $p =  [p_1 \quad  \ldots \quad
p_N]^t  \in \P_N$  (the  set  of probability  vectors  in $\Rset_+^{\,N}$)  that
minimize  a  given $(h,\phi)$-entropy  under  the  constraint  that the  maximum
probability is provided\footnote{ In the context of Shannon entropy, the problem
  was  already solved  in Ref.~\cite{FedMer94},  using  the Karush--Khun--Tucker
  sufficient  conditions   for  convex  optimization   problems  \cite{CamMar09,
    Mil00}.}, i.e., we search for
\begin{equation}
\min_{p \in \P_N} H_{(h,\phi)}(p) = \min_{p \in \P_N} h\!\left( \sum_{k=1}^N
\phi(p_k) \right) \qquad \mbox{s.t. } \quad \max_k p_k = P
\label{Problem_MinEnt_MaxP:eq}
\end{equation}
Notice   that,   due   to   the  normalization   constraint,   one   necessarily
has\footnote{If $P  < \frac{1}{N}  \Rightarrow \sum_k  p_k \le N  P <  1$, which
  would contradict normalization.}
\begin{equation}
P \in \left[ \frac{1}{N} \; 1 \right]
\end{equation}
Note also that in the case $P  = \frac{1}{N}$, then all the $p_k$'s are equal to
$\frac{1}{N}$ (uniform distribution) and thus the problem becomes trivial.

Using  the  fact  that  the  function  to be  optimized  is  invariant  under  a
permutation of the  probability components, we can reduce  the dimensionality of
the problem in the following way: let  us fix $p_1\equiv P$ and define $q = [q_1
\quad \ldots \quad  q_{N-1}]^t \equiv [p_2 \quad \ldots  \quad p_N]^t$; then, to
solve  the optimization  problem~\eref{Problem_MinEnt_MaxP:eq} is  equivalent to
search for
\begin{equation}\left\{\begin{array}{ll}
\displaystyle \min_{q\in\PT_{\!P}} \, \varphi(q) & \mbox{if $\phi$ is concave}
\\[5mm]
\displaystyle \max_{q\in\PT_{\!P}} \, \varphi(q) & \mbox{if $\phi$ is convex}
\end{array}\right.
\label{Problem_MaxNorm_MaxP:eq}
\end{equation}
where we define
\begin{equation}
\varphi(q) =  \sum_{k=1}^{N-1} \phi(q_k)
\end{equation}
and we denote by $\PT_{\!P}$ the allowed domain for $q$, i.e.,
\begin{equation}
\hspace{-10mm} \PT_{\!P} = \left\{ q \in \Rset^{N-1} : \ 0 \leq q_k \leq P \ \wedge \
\sum_{k=1}^{N-1} q_k = 1-P \right\} = \HC_P \cap \HP_P
\label{polytope}
\end{equation}
with  \  $\HC_P =  [0  \; P]^{N-1}$  \  denoting  an $(N-1)$-dimensional  closed
hypercube,  and   $\displaystyle  \HP_P  =   \left\{  q  \in  \Rset^{N-1}   :  \
  \sum_{k=1}^{N-1} q_k  = 1-P \right\}$ corresponding  to an $(N-2)$-dimensional
hyperplane perpendicular to the vector $\vec{1}  = \left[ 1 \quad \cdots \quad 1
\right]^t$.  Notice that the point  $\frac{1-P}{N-1} \left( 1 \quad \cdots \quad
  1 \right)$ is both inside the hypercube $\HC_P$ and on the hyperplane $\HP_P$,
which guarantees that the intersection of those sets is not empty.

It  can be  seen  that $\PT_{\!P}$  is  a convex  polytope  embedded in  $\HC_P$
\cite{LarFlo09};  in other  words,  it is  a  convex body,  convex  hull of  its
vertices that are  the pure points of this convex (i.e.,  the points that cannot
be  written as convex  combination of  several points  of the  set) \cite{Egg58,
  Ber87:II}.

Next,  since  $\varphi$  is  a  strictly concave  (resp.\  convex)  function  on
$\Rset_+^{\,N-1}$,  it   is  also  concave  (resp.\  convex)   on  the  polytope
$\PT_{\!P}$.  It turns out that  $\varphi$ achieves its minimum (resp.\ maximum)
only on  one or several  of the extreme  points (or pure points)  of $\PT_{\!P}$
\cite{CamMar09, RobVar73}.  The problem consists  then in determining the set of
pure points of~\eref{polytope}.  Before studying  the case of arbitrary $N$, let
us illustrate what happens in the cases $N =  3$ and $N = 4$ \ (the case $N = 2$
is  trivial since $\PT_{\!P}$  reduces to  the point  $1-P$, and  the maximizing
probability vector is $(P,1-P)$ where $P$ should be between $\frac 12$ and 1).


\subsubsection{Case $N = 3$.}

Two different situations arise for the  intersection of the line $q_1+q_2 = 1-P$
with the square $0 \le q_1 \le P$, $0 \le q_2 \le P$:
\begin{itemize}
\item For  $\frac12 < P \le  1$, the line  intersects the square in  its ``lower
  corner'' or, in other words, the restriction of the line to the first quadrant
  is entirely inside the  square: $\HP_P\subset\HC_P$, then $\PT_{\!P}=\HP_P$ is
  the   whole  segment  between   the  points   $(1-P,0)$  and   $(0,1-P)$  [see
  Fig.~\ref{Dp_n3:fig} (left plot)]. These are the pure points, and both lead to
  the same extremal value for $\varphi$.
\item  For $\frac13 <  P \le  \frac12$, the  intersection of  the line  with the
  square reduces to  the segment linking the points  $(P,1-2 P)$ and $(1-2P,P)$,
  which are then the pure points of $\PT_{\!P}$ [see Fig.~\ref{Dp_n3:fig} (right
  plot)]. Both points lead to the same extremal value for $\varphi$.
\end{itemize}
Notice that the pure points are on the edges of the square.

\begin{figure}[htbp]
\begin{tabular}
{
>{}m{.51\textwidth}
>{}m{.44\textwidth}
}
\centerline{\includegraphics[width=.6\textwidth]{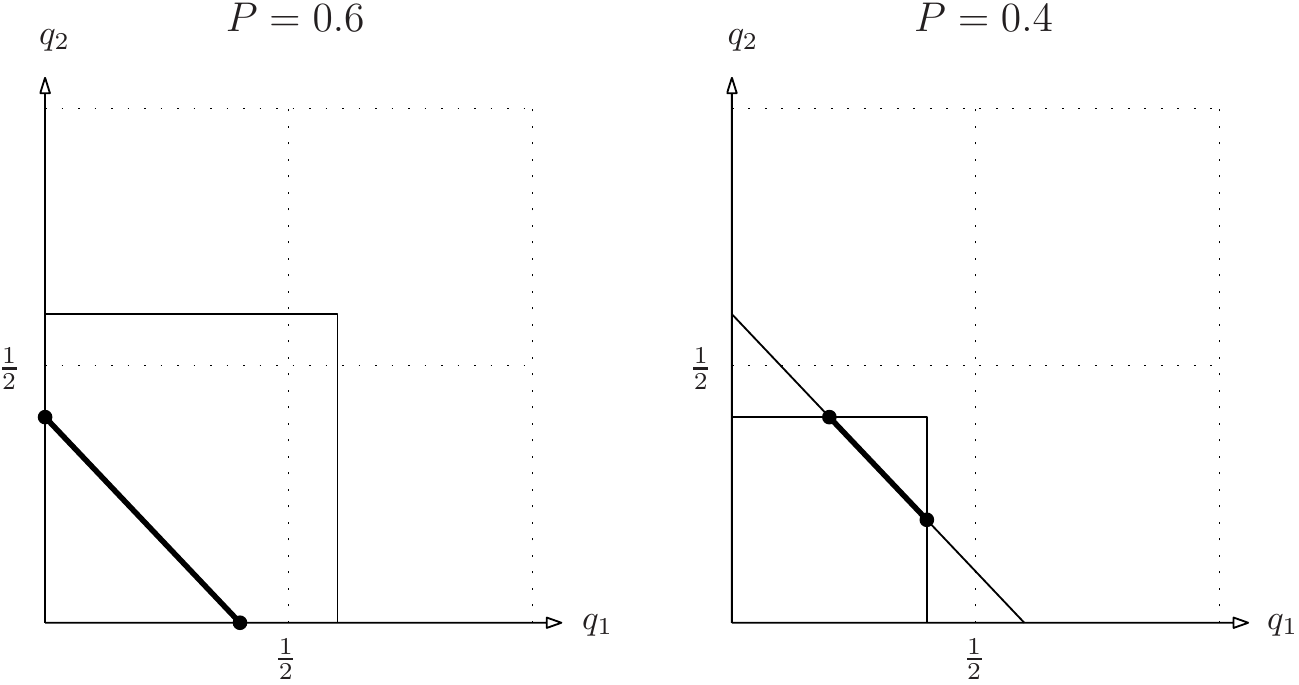}}
&
\caption{Domain $\PT_{\!P}$ (line in bold) in  the case $N=3$, for $P = 0.6$ and
  $0.4$ (from left to right). It is the intersection between the line $q_1+q_2 =
  1-P$ and the square $[0 \; P]^2$.  The pure points of $\PT_{\!P}$ are given by
  the dots.}
\label{Dp_n3:fig}
\end{tabular}
\end{figure}


\subsubsection{Case $N = 4$.}

Now,  three  different  situations  arise  for the  intersection  of  the  plane
$q_1+q_2+q_3 = 1-P$ with the cube $[0 \; P]^3$:
\begin{itemize}
\item  For $\frac12  <  P \le  1$, the  domain  $\PT_{\!P}$ is  the convex  body
  delimited by the triangle of vertices $(1-P,0,0)$, $(0,1-P,0)$ and $(0,0,1-P)$
  (triangle  and its interior);  the plane  intersects the  cube in  its ``lower
  corner'' or, in other words, the  restriction of the plane to the first octant
  is entirely inside the cube  [see Fig.~\ref{Dp_n4:fig} (left plot)].  The pure
  points are then  all the permutations of $(1-P,0,0)$, leading  all to the same
  extremal value for $\varphi$.
\item For $\frac13 < P \le \frac12$,  the plane intersects the six facets of the
  cube,  so that  $\PT_{\!P}$ is  the convex  body delimited  by the  hexagon of
  vertices $(P,1-2P,0)$, $(1-2P,P,0)$, $(0,1-2P,P)$, $(0,P,1-2P)$, $(1-2P,0,P)$,
  $(P,0,1-2  P)$, which are  the pure  points [see  Fig.~\ref{Dp_n4:fig} (middle
  plot)]. All of them lead to the same value for $\varphi$.
\item  For $\frac14  < P  \le \frac13$,  the plane  intersects the  cube  at its
  ``higher corner'',  so that  $\PT_{\!P}$ is the  convex body delimited  by the
  triangle of  vertices $(P,P,1-3  P)$, $(1-3 P,P,P)$  and $(P,1-3  P,P)$, these
  points being its pure  points [see Fig.~\ref{Dp_n4:fig} (right plot)].  Again,
  these points lead to the same value for $\varphi$.
\end{itemize}
Notice that the pure points are on the edges of the cube.

\begin{figure}[htbp]
\centerline{\includegraphics[width=\textwidth]{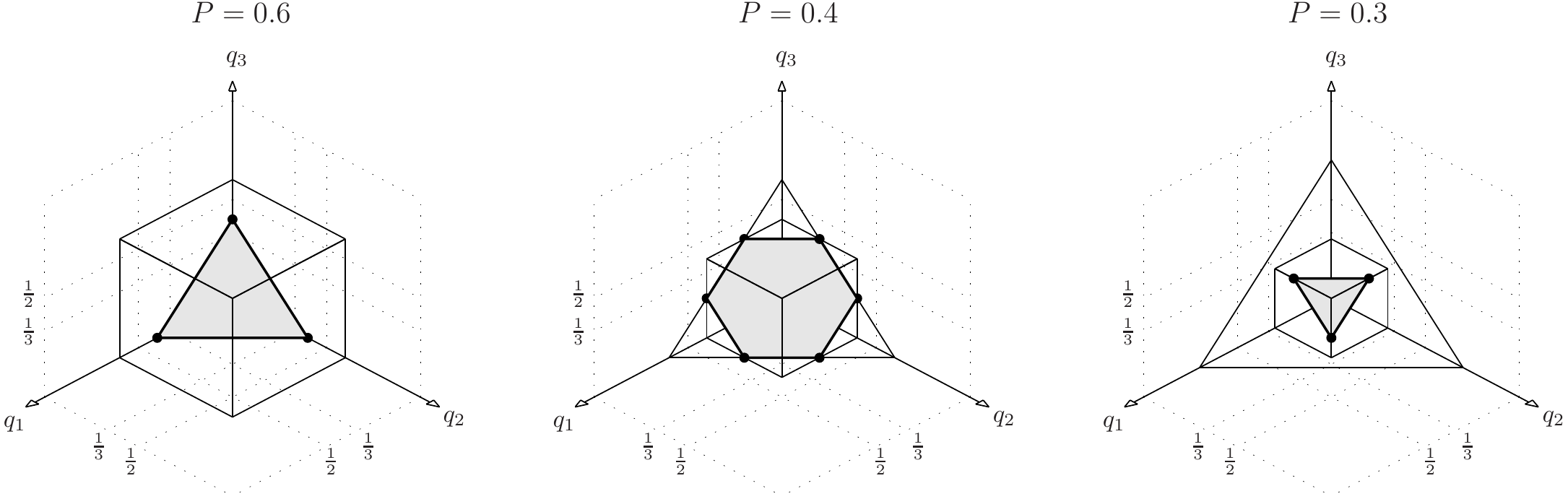}}
\caption{Domain $\PT_{\!P}$ (surface  in gray) in the case $N=4$,  for $P = 0.6,
  0.4$ and $0.3$ (from left to  right). It is the intersection between the plane
  $q_1+q_2+q_3 =  1-P$ and  the cube $[0  \; P]^3$.  The border of  the polytope
  $\PT_{\!P}$ is represented by bold lines, and its pure points are given by the
  dots.}
\label{Dp_n4:fig}
\end{figure}


\subsubsection{Arbitrary  $N=$  case: convex  polytope  $\PT_{\!P}$ and  minimal
  $(h,\phi)$-entropy.}


\paragraph{Pure points of the polytope $\PT_{\!P}$: }

As previously mentioned,  the intersection between a hypercube  and a hyperplane
is a  polytope, convex  hull of  its vertices that  are the  pure points  of the
polytope; moreover, the  vertices of the polytope are on  edges of the hypercube
\cite{LarFlo09}. Finding the vertices (i.e., the pure points) of such a polytope
is not an easy  task in general since the number of  vertices grows rapidly with
dimension $N$ \cite{LarFlo09}.  However,  the problem simplifies drastically due
to the regularity of the hypercube  $\HC_P = [0 \; P]^{N-1}$. Indeed, the $(N-1)
\, 2^{N-2}$ edges $E_P$ are of the form
\begin{equation}
E_P^{\downarrow} = \left\{ ( \underbrace{P,\ldots,P}_{M-1 \:\:
\mbox{\footnotesize times}} , s , \underbrace{0,\ \ldots \ ,0}_{N-M-1 \:\:
\mbox{\footnotesize times}} ) \ , \quad 0\leq s<P \right\}
\label{EP:eq}
\end{equation}
for  every $M  =  1, 2,  \ldots,  N-1$, where  $\cdot^{\downarrow}$ denotes  the
rearrangement of the $(N-1)$-uplet (components put in decreasing order).

A point in $E_P$  is a vertex of the polytope $\PT_{\!P}$  if it also belongs to
$\HP_P$,  that is for  $s^*\in[0 \;  P)$ such  that $(M-1)  P +  s^* =  1-P$, or
$M=\frac{1-s^*}{P}$, which is greater that  $\frac 1P-1$, and less than or equal
to $\frac 1P$.  Since $M$ is an integer  we finally find that, given  a value of
$P$, the pure points are such that
\begin{equation}
M = \left\lfloor \frac{1}{P} \right\rfloor \qquad \mbox{and} \qquad s^* = 1 -
\left\lfloor \frac{1}{P} \right\rfloor P
\label{mEP:eq}
\end{equation}
where $\lfloor \cdot \rfloor$ denotes the floor part.

This allows us to conclude that the  edges of $\HC_P$ contain at most one vertex
of $\PT_{\!P}$ (which is intuitive since  no facet of $\HC_P$ is parallel to the
hyperplane $\HP_P$) and that $\PT_{\!P}$ is the  convex hull of the set of the \
$(N-M) \left(  \!\! \begin{array}{c}  N-1\\[.5mm] M-1 \end{array}  \!\! \right)$
points  that  belong  to $E_P$,  Eq.~\eref{EP:eq},  for  $s$  and $M$  given  in
Eq.~\eref{mEP:eq}. This  has been  illustrated in the  particular cases $N  = 3$
(with $M=1$  and $2$  from left to  right in  Fig.~\ref{Dp_n3:fig}) and $N  = 4$
(with $M=1,2$ and $3$ from left to right in Fig.~\ref{Dp_n4:fig}).


\paragraph{Optimal  vector   and  minimal  entropy: }

As  previously recalled, $\varphi$  being strictly  concave (resp.\  convex), it
achieves its minimum (resp.\ maximum)  on the polytope (convex body) $\PT_{\!P}$
only  in  its vertices  (pure  points).  In  other  words,  the minimal  entropy
solution for the original problem~\eref{Problem_MinEnt_MaxP:eq} is achieved only
for the probability vectors of the form
\begin{equation}
p^{\downarrow} = \big[ \, \underbrace{P \quad \ldots \quad P}_{\left\lfloor
\frac{1}{P} \right\rfloor \:\: \mbox{\footnotesize times}} \quad {\textstyle 1 -
\left\lfloor \frac{1}{P} \right\rfloor} P \quad \underbrace{0 \quad \ldots \quad
0}_{N-\left\lfloor \frac{1}{P} \right\rfloor-1 \:\: \mbox{\footnotesize times}}
\big]^t
\label{Vecp_MinEnt_MaxP:eq}
\end{equation}
and its expression $H_{(h,\phi)}^{\min}(P) \:\: \equiv \displaystyle \min_{p \in
  \P_N: \max_k p_k = P} \: H_{(h,\phi)}(p)$ is given by
\begin{equation}
H_{(h,\phi)}^{\min}(P) \: = \: h \! \left( \left\lfloor \frac{1}{P}
\right\rfloor \phi(P) + \phi \! \left( 1 - \left\lfloor \frac{1}{P}
\right\rfloor P \right) \right)
\label{MinEnt_MaxP:eq}
\end{equation}
where $P \in \left[ \frac{1}{N} \; 1\right]$.

We  can verify {\em  a posteriori}  the solution  obtained for  the minimization
problem, using the Schur-concavity  of $(h,\phi)$-entropies.  Indeed, vector $p$
defined by Eq.~\eref{Vecp_MinEnt_MaxP:eq}  majorizes all the probability vectors
with maximal  probability equal  to $P$,  and thus its  entropy is  minimal over
these probability vectors.


\subsection{Second step: minimization of the sum of minimal $(h,\phi)$-entropies
  subject to the Landau--Pollak inequality}

Recall   that  Landau--Pollak   inequality  links   the   maximal  probabilities
$P_{A,\rho}$  and   $P_{B,\rho}$  corresponding  to  the  POVMs   $A$  and  $B$,
respectively~\cite{BosZoz14}.  We now address the problem of minimization of the
sum of minimal  $(h,\phi)$-entropies, which is written in  terms of $P_{A,\rho}$
and $P_{B,\rho}$, under that inequality constraint.  We first analyze the domain
where the pair $(P_{A,\rho},P_{B,\rho})$ lives  and then the behavior of the sum
of minimal  entropies within this domain.   This allows us  to slightly simplify
the problem.


\subsubsection{Representation of the Landau--Pollak inequality domain.}

Following  our   previous  work~\cite{BosZoz14},  it   can  be  seen   that  the
Landau--Pollak   inequality  constrains  the   pair  of   maximal  probabilities
$\left(P_{A,\rho},P_{B,\rho} \right)$ in the domain:
\begin{equation}
\hspace{-25mm} \Dset_\lp(\vec{c}) \! = \! \left\{ \! (P_A,P_B) \! \in \! \left[
\frac{1}{N_A} \; c_A^{\ 2} \right] \! \times \! \left[ \frac{1}{N_B} \; c_\B^{\
2} \right] :\ P_B \le g_{c_{A,B}}\big(P_A\big) \: \mbox{when} \: P_A \ge
c_{A,B}^{\ 2} \right\}
\label{Dlp:eq}
\end{equation}
where $\vec{c} = (c_A,c_B,c_{A,B})$ and
\begin{equation}
g_c(x) = \cos^2\left(\arccos c - \arccos\sqrt{x} \right)
\end{equation}
If $c_B^{\  2} \le g_{c_{A,B}}(c_A^{\  2})$, the allowed domain  becomes $\left[
  \frac{1}{N_A} \;  c_A^{\ 2} \right]  \times \left[ \frac{1}{N_B} \;  c_B^{\ 2}
\right]$.  This is represented in Fig.~\ref{Dlp:fig}.

\begin{figure}[htbp]
\centerline{\includegraphics[width=.75\textwidth]{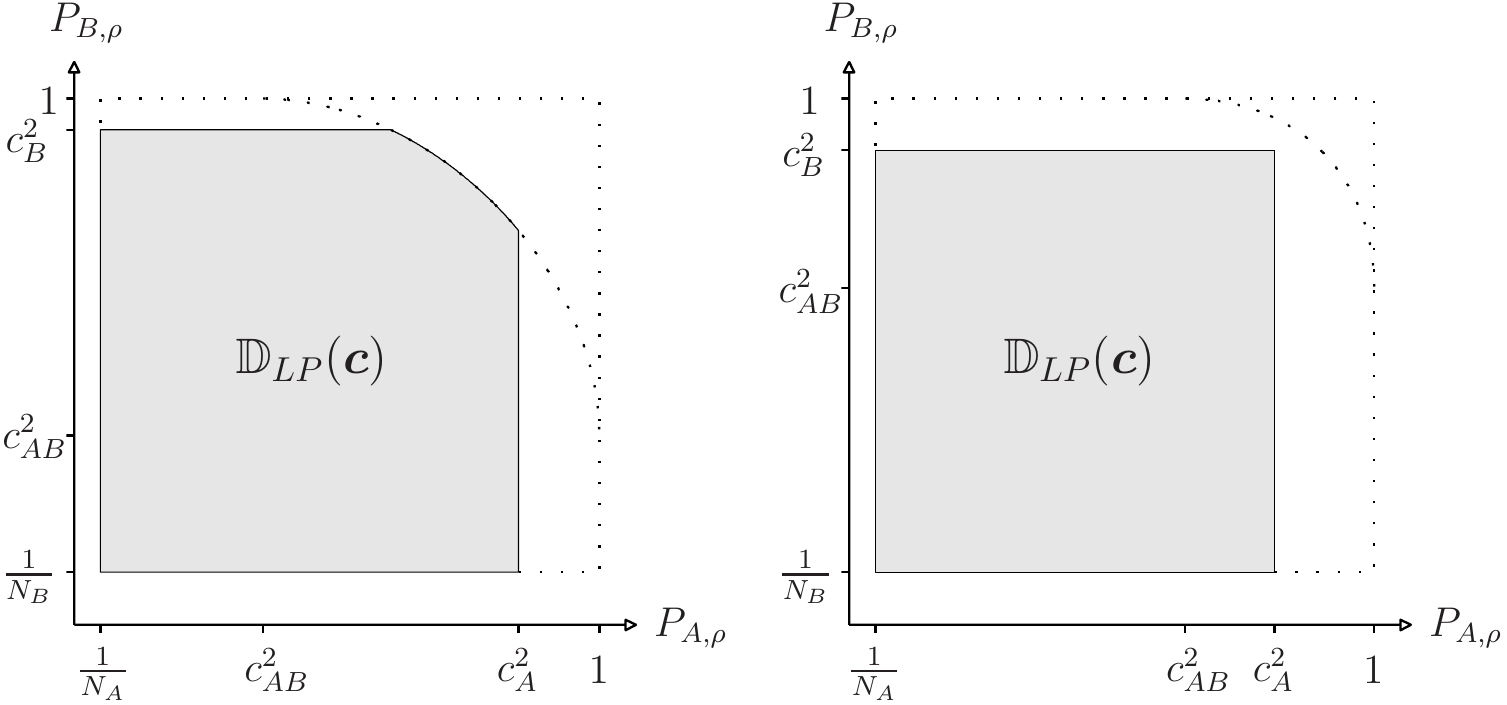}}
\caption{Representation  (shaded  region)  of  the  domain  $\Dset_\lp(\vec{c})$
  \eref{Dlp:eq}  for   pairs  of  maximal   probabilities  when  $c_B^{\   2}  >
  g_{c_{A,B}}(c_A^{\  2})$ (left)  and  $c_B^{\ 2}  \le g_{c_{A,B}}(c_A^{\  2})$
  (right).}
\label{Dlp:fig}
\end{figure}


\subsubsection{Minimal entropies sum.}

We have reduced the problem to solve
\begin{equation}
\min_{(P_{A,\rho},P_{B,\rho})\in \Dset_\lp(\vec{c})} \big\{
H_{(h_A,\phi_A)}^{\min}(P_{A,\rho}) + H_{(h_B,\phi_B)}^{\min}(P_{B,\rho}) \big\}
\end{equation}
for    given    $A$,    $B$,    $(h_A,\phi_A)$    and    $(h_B,\phi_B)$,    with
$H_{(h,\phi)}^{\min}(P)$ given by Eq.~\eref{MinEnt_MaxP:eq}.  For any $M = 1, 2,
\ldots,N-1$, and  for any $P_1$ and  $P_2$ such that $\frac{1}{M+1}  \le P_1 \le
P_2 \le  \frac{1}{M}$ we have $[P_1  \: \ldots \: P_1  \quad 1-M P_1  \quad 0 \:
\ldots \: 0]^t  \prec [P_2 \: \ldots \:  P_2 \quad 1-M P_2 \quad 0  \: \ldots \:
0]^t$   and  thus,   from   the  Schur-concavity   of  the   $(h,\phi)$-entropy,
$H_{(h,\phi)}^{\min}(P_1)   \ge  H_{(h,\phi)}^{\min}(P_2)$.   In   other  words,
function  $P \mapsto  H_{(h,\phi)}^{\min}(P)$  is decreasing  in each  intervals
$\left(  \frac{1}{M+1}  \; \frac{1}{M}  \right)$  and  thus,  by continuity,  in
$\left( 0  \; 1  \right]$.  This is  illustrated in Fig.~\ref{HaMin:fig}  in the
case of R\'enyi and Tsallis entropies.

\begin{figure}[htbp]
\centerline{\includegraphics[width=.975\textwidth]{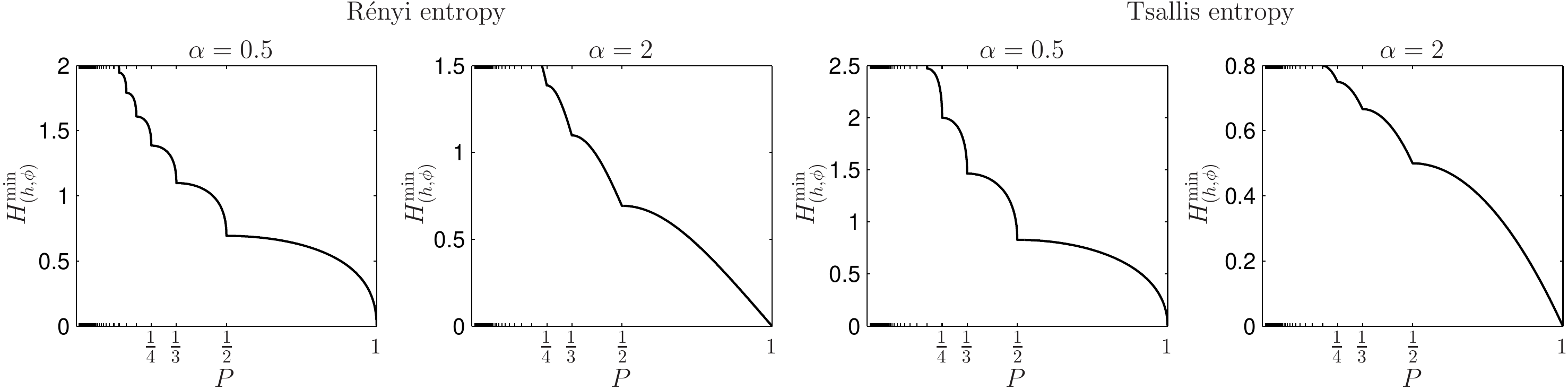}}
\caption{Decreasing behavior of  the function $H_{(h,\phi)}^{\min}(P)$ versus $P
  \in \left( 0 \;  1 \right]$, in the case of R\'enyi  entropy (first and second
  plots) and in the case of  Tsallis entropy (third and fourth plots).  Here the
  entropic index are $\alpha = 0.5$ or $2$, as indicated.}
\label{HaMin:fig}
\end{figure}

Reasoning  by  fixing  $P_{A,\rho}$   and  minimizing  the  entropies  sum  over
$P_{B,\rho}$ and reversing the roles of  $A$ and $B$, we immediately obtain that
the minimum is achieved when:
\begin{itemize}
\item $\left(P_{A,\rho},  P_{B,\rho}\right) = (c_A^{\ 2},c_B^{\  2})$ if $c_B^{\
    2} \le g_{c_{A,B}}(c_A^{\ 2})$. Thus, the minimum takes the analytical form
$$
H_{(h_A,\phi_A)}^{\min}(c_A^{\ 2}) + H_{(h_B,\phi_B)}^{\min}(c_B^{\ 2})
$$
or
\item $\left(P_{A,\rho}, P_{B,\rho}\right)$ is  in the curve $ \left(P_{A,\rho},
    g_{c_{A,B}}(P_{A,\rho})    \right)$     with    $P_{A,\rho}    \in    \left[
    g_{c_{A,B}}(c_B^{\   2}),    c_A^{\   2}   \right]$   if    $c_B^{\   2}   >
  g_{c_{A,B}}(c_A^{\ 2})$.
\end{itemize}
Let us define the angles
$$
\gamma_A \equiv \arccos c_A, \quad \gamma_B \equiv \arccos c_B, \quad
\mbox{and} \quad \gamma_{A,B} \equiv \arccos c_{A,B}
$$
the one-to-one mapping
$$
P_{A,\rho} \equiv \cos^2 \theta \qquad \mbox{with} \qquad \theta \in [\gamma_A
\;\gamma_{A,B} - \gamma_B],
$$
leading to
$$
g_{c_{A,B}}(P_{A,\rho}) = \cos^2(\gamma_{A,B}-\theta)
$$
with \  $\gamma_{A,B} -  \theta \in [\gamma_A  \;\gamma_{A,B} -  \gamma_B]$, and
function
\begin{equation}
\D_{(h,\phi)}(\theta) \equiv h \! \left( \left\lfloor
\frac{1}{\cos^2\theta}\right\rfloor \phi \! \left( \cos^2 \theta \right) + \phi
\! \left( 1 - \left\lfloor \frac{1}{\cos^2\theta}\right\rfloor \cos^2\theta
\right) \right),
\end{equation}
With these notations,
\begin{itemize}
\item Condition  $c_B^{\ 2} \le g_{c_{A,B}}(c_A^{\ 2})$  simplifies to $\gamma_B
  \ge \gamma_{A,B} - \gamma_A$,\vspace{4mm}
\item  $H_{(h_A,\phi_A)}^{\min}(c_A^{\ 2})  =  \D_{(h_A,\phi_A)}(\gamma_A)$ (and
  similarly for $B$),\vspace{4mm}
\item  $H_{(h_A,\phi_A)}^{\min}(P_{A,\rho})   =  \D_{(h_A,\phi_A)}(\theta)$  and
  $H_{(h_B,\phi_B)}^{\min}(g_{c_{A,B}}(P_{A,\rho}))                             =
  \D_{(h_B,\phi_B)}(\gamma_{A,B}-\theta)$
\end{itemize}
Thus, the minimal entropies sum is given by
\begin{equation*}\left\{\begin{array}{lll}
\displaystyle \D_{(h_A,\phi_A)}(\gamma_A) + \D_{(h_B,\phi_B)}(\gamma_B) &
\mbox{if} \quad \gamma_{A,B} \le \gamma_A + \gamma_B \\[5mm]
\displaystyle \min_{\theta \in [\gamma_A, \gamma_{A,B} - \gamma_B]} \left(
\D_{(h_A,\phi_A)}(\theta) + \D_{(h_B,\phi_B)}(\gamma_{A,B}-\theta) \right) &
\mbox{otherwise}
\end{array}\right.\end{equation*}
proving the Proposition.  Note that the cosine being increasing (in the interval
$\theta$  lies in),  the decreasing  property of  $H_{h,\phi}^{\min}(P)$ implies
that $\D_{(h,\phi)}(\theta)$ is increasing vs $\theta$.


\section{Proof of Corollary~\ref{CorollaryQubit:cor}}
\label{CorollaryQubit:app}

Remember that in this case, we have $N_A=N_B=N$, $c_A=c_B=1$ and $c_{A,B}=c$.

In Ref.~\cite{ZozBos13} we solved the problem in the case of the qubit ($N = 2$)
for pure states and for the R\'enyi entropy. It appears that:
\begin{itemize}
\item  This result  extends  for arbitrary  pairs  of $(h,\phi)$-entropies;  the
  approach~\cite[Appendix A]{ZozBos13}  extends step by step  to such entropies,
  where  the concavity (resp.\  convexity) of  $\phi_B$ is  used instead  of the
  convexity  of  the  mapping  $z  \mapsto  \frac{\sum_k  |z_k|^\beta}{\beta-1}$
  (see~\cite[Eq.~(A.13)]{ZozBos13})    and   where   the    Schur-concavity   of
  $H_{h_B,\phi_B}$ is used to  finish the proof (see~\cite[Eqs.~(A.14)-(A.19) \&
  App. A.3.2]{ZozBos13}), which allows  to consider functions $h_B$ and $\phi_B$
  nonnecessarily differentiable.
\item The  extended bound for the  qubit and pure states  writes precisely under
  the  form  Eq.~\eref{Cabnondeg:eq} where  $c  >  \frac{1}{\sqrt{2}}$ and  thus
  $\left\lfloor  \frac{1}{\cos^2  \theta} \right\rfloor  =  1$  (the  case $c  =
  \frac{1}{\sqrt{2}}$ is recovered by continuity).
\item The  minimizing pure states of Proposition~2  of~\cite{ZozBos13} expressed
  through the optimal angles $\theta$ hold, where these angles clearly depend of
  the pairs of functionals $(h_A,\phi_A)$ and $(h_B,\phi_B)$.
\item Due to the coincidence  of bound~\eref{Cabnondeg:eq} and the optimal bound
  for pure states, this bound remains  optimal in the mixed states (a pure state
  being a particular pure state).
\end{itemize}

Since the case $N=2$ is already treated, let us concentrate on $N \ge 3$.

In the context of pure states,  one has $\rho = |\Psi\rangle \langle\Psi|$ where
$|\Psi\rangle$ is  an element  of an $N$-dimensional  Hilbert space.   Using the
notation of  Ref.~\cite{ZozBos13}, the state  $|\Psi\rangle$ can be  expanded on
the eigenbases of $A$  and $B$ under the form \ $\left|  \Psi \right\rangle \: =
\: \sum_{i=1}^N \psi_i \, |a_i\rangle \: = \: \sum_{j=1}^N \widetilde{\psi}_j \,
|b_j\rangle$.   Thus   \  $p_i(A,\rho)  =  |\psi_i|^2$  and   \  $p_j(B,\rho)  =
|\widetilde{\psi}_j|^2$.  Moreover, arranging  the complex coefficients $\psi_i$
and $\widetilde{\psi}_j$ in column vectors,  $\psi = \left[ \psi_1 \cdots \psi_N
\right]^t$   and  \   $\widetilde{\psi}  =   \left[   \widetilde{\psi}_1  \cdots
  \widetilde{\psi}_N \right]^t$, one  can see that these vectors  are linked via
$\widetilde{\psi}  =  T \psi$  where  $T$  is  the transformation  matrix  whose
elements are defined in Eq.~\eref{Relation_psi_psitilde:eq}.

Now,  let  us consider  $N  \times  N$  unitary matrices  of  the  form \  $T  =
\left[  \begin{array}{cc} T^{(2)}  & 0  \\ 0  & T^{(N-2)}  \end{array} \right]$,
where $T^{(n)}$  stands for an  $n \times n$  unitary matrix, and we  impose the
largest-modulus element  of $T$  to be  ``located'' in $T^{(2)}$,  that is  $c =
\max_{i,j} |T_{ij}|  = \max_{i,j} |T_{ij}^{(2)}|$.   This last condition  can be
fulfilled  only  if  $N  \geq  4$  because  one  must  have  $c  \ge  \max_{i,j}
|T_{ij}^{(N-2)}|   \in   \left[  \frac{1}{\sqrt{N-2}}   \;   1  \right]$.    Let
$|\Psi^{(2)}\rangle=\psi^{(2)}_1 \,  |a_1\rangle+\psi^{(2)}_2 \, |a_2\rangle$ be
a  minimizing  qubit  pure  state  corresponding to  the  transformation  matrix
$T^{(2)}$   so   that    $H_{(h_A,\phi_A)}   \big(   p(A,\rho^{(2)})   \big)   +
H_{(h_B,\phi_B)}          \big(         p(B,\rho^{(2)})          \big)         =
\widetilde{\B}_{(h_A,\phi_A),(h_B,\phi_B);2}(c)$      with     $\rho^{(2)}     =
|\Psi^{(2)}\rangle \langle\Psi^{(2)}|$.   Consider the density  operator $\rho =
|\Psi\rangle \langle\Psi|$ build up  from the extended pure state $|\Psi\rangle$
such  that  its  vector  of  coefficients is  $\psi  =  \left[  \begin{array}{c}
    \psi^{(2)} \\ 0 \end{array}  \right]$.  Then one has $H_{(h_A,\phi_A)} \big(
p(A,\rho)  \big) +  H_{(h_B,\phi_B)}  \big( p(B,\rho)  \big) =  H_{(h_A,\phi_A)}
\big(  p(A,\rho^{(2)}) \big)  + H_{(h_B,\phi_B)}  \big( p(B,\rho^{(2)})  \big) =
\widetilde{\B}_{(h_A,\phi_A),(h_B,\phi_B);2}(c) = \B_{(h_A,\phi_A),(h_B,\phi_B)}
(c)$.   The last  equality comes  from the  coincidence between  the $c$-optimal
bound  for  the  qubit  case (see  above),  and  expression~\eref{Cabnondeg:eq}.
Finally,   by    definition   of   $c$-optimal    bound,   one   has    both   $
\B_{(h_A,\phi_A),(h_B,\phi_B)}(c)                                             \le
\widetilde{\B}_{(h_A,\phi_A),(h_B,\phi_B);N}(c)$              and              $
\B_{(h_A,\phi_A),(h_B,\phi_B)}(c)  = H_{(h_A,\phi_A)}  \big(  p(A,\rho) \big)  +
H_{(h_B,\phi_B)}           \big(          p(B,\rho)           \big)          \ge
\widetilde{\B}_{(h_A,\phi_A),(h_B,\phi_B);N}(c)$,   proving  the  $c$-optimality
of~\eref{Cabnondeg:eq} when $c > \frac{1}{\sqrt{2}}$ and $N \geq 4$.

The problem  of the $c$-optimality  of the  bound for $N  = 3$ remains  open. We
suspect that it is so but we have not been able to prove this yet.


\section{Proof of Corollary~\ref{ImproveDeutsch:cor}}
\label{CorollaryDeutsch:app}

It can be seen that our bound~\eref{Cabnondeg:eq} in the case of R\'enyi entropy
when    $\alpha$    and    $\beta$    are   sufficiently    large,    gives    $
\B_{\infty,\infty;\log}(c)   =   \min_{\theta   \in   [0  \;   \gamma]}   [   -2
\log(\cos\theta)  - 2  \log\left(\cos(\gamma-\theta)\right)]$.   The minimum  is
attained  for $\theta  = \frac{\gamma}{2}$  so  that we  recover Deutsch  bound:
$\B_{\infty,\infty;\log}(c) = -2  \log \left( \frac{1+c}{2}\right) = \B^{D}(c)$.
Now, consider our bound $\B_{\alpha,\beta;\log}(c)$ which is the solution of the
minimization~\eref{Cabnondeg:eq},  and  the  probability  $P_A$  for  which  the
minimum is attained.  Since R\'enyi entropy decreases versus the entropic index,
we  have  $ \B_{\alpha,\beta;\log}(c)  =  R_\alpha^{\min}(P_A) +  R_\beta^{\min}
\left(g_c(P_A)\right)           \ge           R_{\infty}^{\min}(P_A)           +
R_{\infty}^{\min}\left(g_c(P_A)\right) \ge \B_{\infty,\infty;\log}(c) = \B^D(c)$
where $R_{\lambda}^{\min} \equiv  H_{\left( \frac{\log}{1-\lambda} , \id^\lambda
  \right)}^{\min}$, that proves that our bound improves Deutsch one.


\section{Proof of Corollary~\ref{NoImprovedMU:cor}}
\label{CorollaryMU:app}

Let us  consider the extreme  pair of indices  $(\alpha,\beta) = (0,0)$,  and go
back to expression~\eref{Cabnondeg:eq} for the bound,
$$
\B_{0,0;f} (c) = \min_{P_{A,\rho}  \in [c^2 \; 1]} \left[ R_0^{\min}(P_{A,\rho})
  + R_0^{\min}\left(g_c(P_{A,\rho})\right) \right]
$$
By    symmetry     of    the     quantity    in    square     brackets,    since
$g_c\left(\frac{1+c}{2}\right) = \frac{1+c}{2}$, one can restrict the search for
$P_{A,\rho}$ to the interval $\left[ c^2 \; \frac{1+c}{2} \right]$. Then:
\begin{itemize}
\item  For  $P_{A,\rho}   =  c^2$  one  has  $g_c(P_{A,\rho})   =  1$  and  thus
  $R_0^{\min}\left(g_c(P_{A,\rho})\right)  = 0$ while  $R_0^{\min}(P_{A,\rho}) =
  \log \left( \left\lceil \frac{1}{c^2} \right\rceil \right)$.
\item  For  $P_{A,\rho}  \in  \left(  c^2  \;  \frac{1+c}{2}  \right]$  one  has
  $g_c(P_{A,\rho}) \in \left[ \frac{1+c}{2}  \; 1 \right) \subset \left( \frac12
    \; 1 \right)$ and thus $R_0^{\min}\left(g_c(P_{A,\rho})\right) = \log 2$.  A
  rapid inspection of $R_0^{\min}(P_{A,\rho})$ allows  one to prove that in this
  interval it decreases  vs $P_{A,\rho}$ and that the minimum  is also $\log 2$.
  Thus,  $$\min_{P_{A,\rho}  \in \left(  c^2  \;  \frac{1+c}{2} \right]}  \left[
    R_0^{\min}(P_{A,\rho}) +  R_0^{\min}\left(g_c(P_{A,\rho})\right) \right] = 2
  \log 2$$
\end{itemize}
Therefore
$$
\B_{0,0;\log}(c) = \min  \left\{ 2 \log 2 \: , \:  \log \left( \left\lceil \frac{1}{c^2}
    \right\rceil \right) \right\} .
$$
Now,  when $c  \le \frac12$,  we have  $\B^{MU} (c)  = -2  \log c  \ge \log  4 =
\B_{0,0;\log}(c)$.  Moreover,  in this case $\B_{0,0;\log}(c) =  2 \, R_0^{\min}
\left(  \frac{1+c}{2} \right)$  so  that  by using  the  decreasing property  of
$R_{\lambda}^{\min}$ vs $\lambda$ we obtain
$$
\B_{\alpha,\beta;\log}(c)  \le  R_\alpha^{\min}\left(  \frac{1+c}{2}  \right)  +
R_\beta^{\min}\left(   \frac{1+c}{2}  \right)   \le  2   \,   R_0^{\min}  \left(
  \frac{1+c}{2} \right) = \B_{0,0;\log}(c) \le \B^{MU}(c)
$$
that concludes the proof.


\section*{References}
\bibliography{Uncertainty_qunit_R1}
\bibliographystyle{unsrt}

\end{document}